\theoremstyle{plain}
\newtheorem{lemma}{Lemma}
\newtheorem{theorem}{Theorem}
\theoremstyle{definition}
\theoremstyle{remark}
\title{Identification of Linear Time-Varying Systems Through Waveform Diversity}
\author{Andrew~Harms, \IEEEmembership{Member,~IEEE, }%
Waheed~U.~Bajwa, \IEEEmembership{Senior Member,~IEEE, }%
and Robert~Calderbank, \IEEEmembership{Fellow,~IEEE}%
\thanks{A.~Harms and R.~Calderbank are with the Department of Electrical and Computer Engineering, Duke University, Durham, NC 27708.}%
\thanks{W.~U.~Bajwa is with the Department of Electrical and Computer Engineering, Rutgers University, Piscataway, NJ 08854.}%
\thanks{A.~Harms and R.~Calderbank are supported in part by the Air Force Office of Scientific Research under grant FA9550-13-1-0076 from the Complex Networks program.}%
}
\begin{document}
\maketitle

\begin{abstract}
Linear, time-varying (LTV) systems composed of time shifts, frequency shifts, and complex amplitude scalings are operators that act on continuous finite-energy waveforms.  This paper presents a novel, resource-efficient method for identifying the parametric description of such systems, i.e., the time shifts, frequency shifts, and scalings, from the sampled response to linear frequency modulated (LFM) waveforms, with emphasis on the application to radar processing.  If the LTV operator is probed with a sufficiently diverse set of LFM waveforms, then the system can be identified with high accuracy.  In the case of noiseless measurements, the identification is perfect, while in the case of noisy measurements, the accuracy is inversely proportional to the noise level.  The use of parametric estimation techniques with recently proposed denoising algorithms allows the estimation of the parameters with high accuracy.

\end{abstract}

\section{Introduction}
In active sensing, a physical system is probed by a known waveform to identify the physical parameters of the system by processing the system response to the input waveform.  This paper is concerned with physical systems that are modeled by linear time-varying (LTV) systems described by time and frequency shifted versions of the input waveform.  For example, the response $y(t)$ to an input waveform $x(t)$ that is time shifted by $\tau$ and frequency shifted by $f$ would look like
\begin{equation}\label{eq:LTV-system-example}
  y(t) = x(t-\tau)e^{j2\pi f t}.
\end{equation}
The time-varying nature of the system is due to the modulation by the complex sinusoid, which is time-dependent.  Such LTV system models are important in radar processing, channel estimation for communication systems, and other areas, because the time and frequency shifts directly relate to path distances and velocities of physical objects that affect the signal returns.  In the communications literature \cite{kailath-TVCC-1962,bello-1962,bello-1963}, the time and frequency shifts are often modeled probabilistically, while in radar applications they are modeled as deterministic and fixed over short time intervals.  This paper will concentrate on the latter.

LTV systems are good models for multipath channels in which multiple copies of the transmitted signal, each with a time delay and a Doppler shift, are acquired by the receiver.  Assuming the bandwidth of the signal is small relative to the carrier frequency, the Doppler shift is adequately modeled by a frequency shift (see Section~\ref{sec:ltv-description} for a more details).  Fig.~\ref{fg:multi-path-scene} depicts a reference scene to illustrate a possible multipath channel.  Here, the transmitter and receiver are collocated, as is the case in a monostatic radar system.  Each object is identified by its path length, or, through the speed of light, the time delay from transmission to reception.  The moving car and airplane are further identified by a Doppler shift determined by their (radial) velocities.  The receiver processes the received signal using the transmitted signal as a reference.  The approach presented in this paper, as well as traditional approaches, is also readily adaptable to other scenarios, such as a non-collocated transmitter and receiver.  In this case, the receiver uses prior information about the transmitted signal, e.g., predefined pilot tones that are common in communication systems, or estimates the transmitted signal from the path directly from transmitter to receiver.  The method can also be adapted to a multi-antenna scenario in which the array is electronically steered.
For clarity of exposition, this paper concentrates on a collocated transmitter and receiver using a single omni-directional antenna for each, so that we may assume the receiver has a perfect copy of the transmitted signal.

\subsection{Existing Techniques for Identification}
Traditional processing employs a matched filter (MF), which correlates the received signal against hypothesized time and frequency shifts of the probe signal. Matched filtering is the maximum likelihood estimator for a single return (i.e., scatterer) in white Gaussian noise \cite{skolnik08}, but the detection of multiple targets is hampered by the spreading of the target peak.  The spreading is captured in the ambiguity function (i.e., 2-dimensional cross-correlation of time and frequency shifts) of the probe signal and limits the resolution of the time shifts and frequency shifts of targets in close proximity\cite{woodward-radar1953}.  An example ambiguity function is shown in Fig. \ref{fg:lfm-ambiguity-function} for a linear frequency modulated (LFM) pulse.  The waveform will be described in greater detail in Section \ref{sec:LFM-waveform}, but the detail to note is the line of large intensity from $-50$ ms to $50$ ms that couples the time and frequency shifts.  Thus, even in the absence of noise the detection of multiple returns is fundamentally limited by the waveform itself, which is captured by the ambiguity function.  Two targets described by parameters that happened to fall on that line would be indistinguishable from one another.  Matched filtering, or approximations to it, still finds large utility in practice because it is simply and efficiently computed by the fast Fourier transform \cite{skolnik08}.

\begin{figure}[tp]
  \centering
    \setlength{\unitlength}{.25in}
  \begin{picture}(12,6)
    \put(0.0, 0.5){\makebox{Tx/Rx}}
    \put(1.0, 1.0){\vector(3,1){8.9}}
    \put(9.8, 4.1){\vector(-3,-1){8.0}}
    
    \put(1.0, 1.0){\vector(1,2){2.0}}
    \put(2.8, 5.1){\vector(-1,-2){1.5}}
    
    \put(1.0, 1.0){\vector(1,0){3.0}}
    \put(4.0, 0.8){\vector(-1,0){2.5}}
    \put(2.6, 5.5){\makebox{Building}}
    \put(10.2, 4.5){\makebox{Airplane}}
    \put(4.4, 1.0){\makebox{Car}}
    \thinlines
    \qbezier(8.8, 6.0)(10.6, 4.0)(10.7,1.0)
    \qbezier(1.0, 6.0)(5.0, 5.0)( 6.0, 1.0)
    \qbezier(1.0, 4.2)(3.5, 3.5)( 4.2, 1.0)
  \end{picture}
   
  \caption{In a multipath channel, the received signal (Rx) contains multiple copies of the transmitted signal (Tx) that are reflections from various objects.  If the Tx and Rx are collocated, then the time shift is proportional to the distance from the Tx/Rx to the object and the frequency shift is proportional to the object's radial velocity.  Note that any objects located along each circle would impart the same time shift.}
  \label{fg:multi-path-scene}
\end{figure}
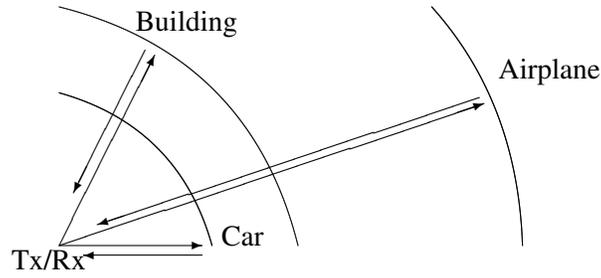

\begin{figure}[tp]
  \centering
  \includegraphics[width=0.85\columnwidth]{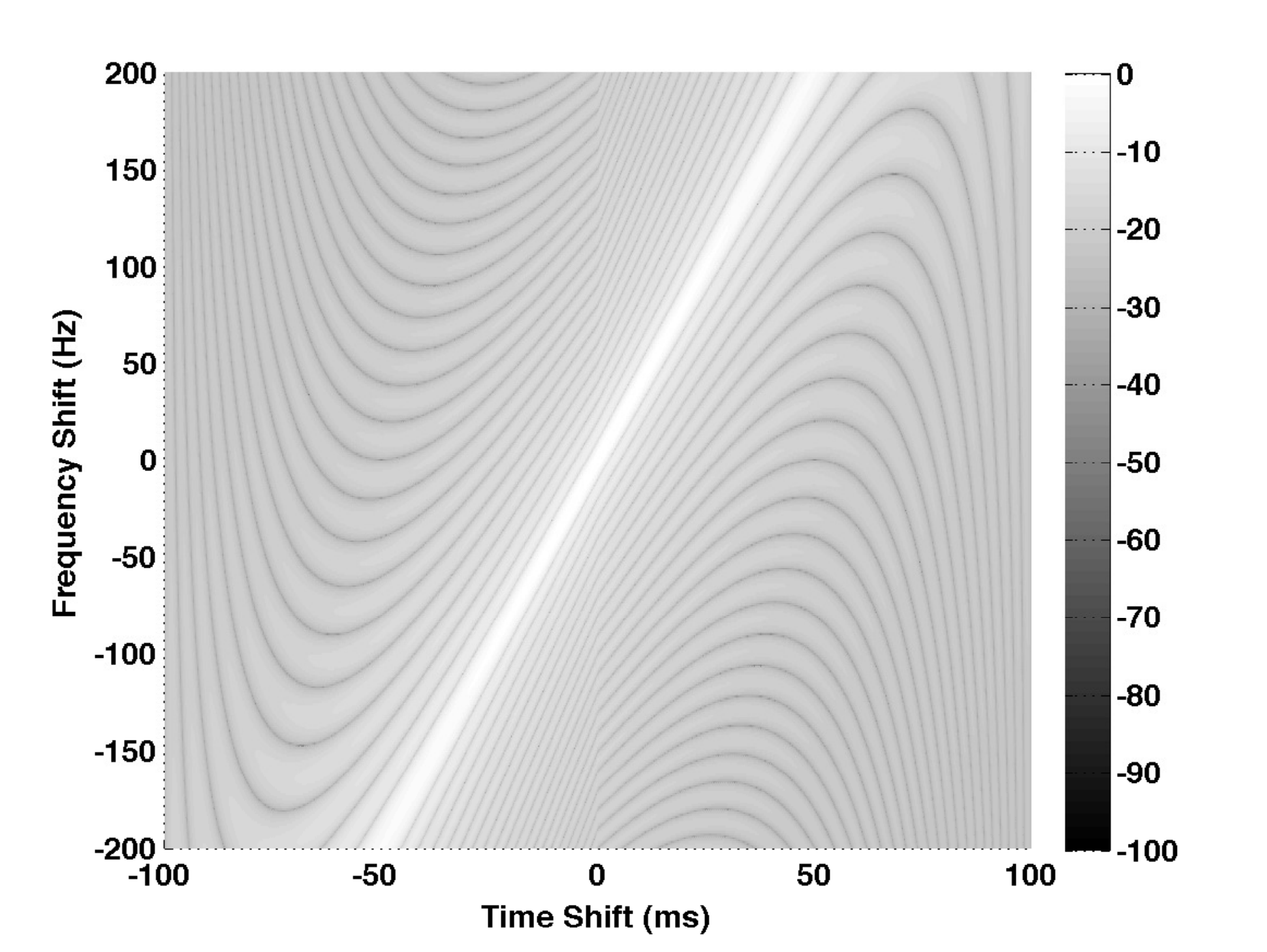}
  \caption{Ambiguity function of an LFM pulse.}
  \label{fg:lfm-ambiguity-function}
\end{figure}

Recently, other techniques have been proposed that are not fundamentally limited by the signal ambiguity function.  One technique uses ideas from the compressed sensing literature to efficiently identify LTV systems composed of a small number of returns, relative to the entire time-frequency shift space considered, and whose parameters live on a discretized grid of the time-frequency shift space \cite{strohmer-cs-radar-2009}.  This technique leverages the rich set of analysis techniques and recovery algorithms offered by compressed sensing.  The chief disadvantage is the necessity to discretize the time-frequency shift space because real-world LTV systems rarely conform to this assumption.  The mis-match to the assumed discretized basis has been shown to cause poor performance \cite{chi11}.  A similar but more general idea is explored in \cite{heckel-bolcskei-2013} in which fundamental limits of identification are related to the time-frequency shift \emph{spread} of the LTV operator using a similar discretization of the time-frequency shift space.

Parametric techniques have also been proposed as an alternative to matched-filter processing, including \cite{friedlander-parametric2012} and \cite{bajwa11-radar}.  The common thread to both of these approaches is a sequential recovery of the time shift followed by the frequency shift, or vice-versa.  A primary drawback is the sequential nature of the identification, especially when noise is considered.  Errors in the first step propagate to the second step and hamper the second stage of recovery.  In fact, the technique proposed in \cite{bajwa11-radar} does not consider noise in the analysis.  These techniques are also less efficient because the targets must be resolvable in both steps of the recovery.  The disadvantages will be discussed in more detail later and compared to the requirements of the proposed technique.

\subsection{Our Contributions}
We propose a technique that allows simultaneous identification of the time shifts, frequency shifts, and amplitude scalings of the returns from the LTV system.  The technique utilizes linear frequency modulation (LFM) pulses, i.e., linear chirps, as the probing waveform.  An LFM waveform has the form
$$x(t) = e^{j2\pi f_ct^2}g(t)$$
where $g(t) = 0$ $\forall t \not\in [0,T_p]$ and $f_c$ is the \emph{chirp rate} that defines how fast the frequency sweeps.  The returns from these probing waveforms, after pre-processing, are a superposition of sinusoids with frequency determined by the time and frequency shift of the target.  If a diverse set of LFM waveforms with different chirp rates $f_c$ is used to probe the system, then we can determine the time shifts and frequency shifts of the targets that produced the returns.  When multiple targets are present, we show that a diverse set of LFM pulses is sufficient to recover the description of each target.  In the case of noisy measurements, we show that the error in the estimated target parameters is proportional to the signal-to-noise ratio (SNR).  We also characterize the resource efficiency of the technique through analysis of the time-bandwidth product of the LFM waveform, which is characterized by the largest possible time and frequency shift of the target returns.  A comparison of the technique presented in this paper and selected other techniques is summarized in Table \ref{tab:summary-comparison}.  Notice that our approach improves on both the resolution and the number of samples required.  This work builds upon initial work \cite{harms-camsap-2013} in which we presented preliminary analysis of using a diverse set of LFM pulses for recovery from noiseless measurements.  In the present work, we expand the analysis to noisy measurements, expand the discussion of the usefulness of diversity, and include expanded Monte Carlo numerical experiments.

\begin{table*}
  \centering
  \begin{tabular}{p{1.2in}  c  c  c  c  c}
    Approach & Num. of Samples & Time-Bandwidth & Resolution w/o noise & Avg. Complexity & Resolution w/ noise \\
    \hline
    Matched Filter \cite{skolnik08} & $W\cdot \tau_{\max}$ & $\frac{1}{\Delta\tau \cdotp \Delta f}$ & $\Delta\tau \propto \frac{1}{W}$, $\Delta f \propto \frac{1}{M}$ & poly($\Delta\tau,\Delta f$) & $\Delta\tau \propto \frac{1}{W}$, $\Delta f \propto \frac{1}{M}$ \\
    \hline
    Friedlander \cite{friedlander-parametric2012} & $M\cdot K^2$ & $K^2$ & $\infty$ & poly($K$) & $\Delta\tau \propto \frac{1}{M^3N}$, $\Delta f \propto \frac{1}{N^3M}$ \\
    \hline
    Bajwa et al. \cite{bajwa11-radar} & -- & $K^2$ & $\infty$ & poly($K$) & -- \\
    \hline
    Herman, Strohmer \cite{strohmer-cs-radar-2009} & $M\cdot K^2$ & $K^2$ & $\Delta\tau \propto \frac{1}{W}$, $\Delta f \propto \frac{1}{M}$ & poly($K$) & $\Delta\tau \propto \frac{1}{W}$, $\Delta f \propto \frac{1}{M}$ \\
    \hline
    Harms et al. (this paper) & $M\cdot K$ & $K^2$ & $\infty$ & poly($K$) & $\Delta\tau, \Delta f \propto \frac{1}{(MN)^3}$ \\
    \hline
  \end{tabular}
  \caption{Summary comparison of the approach described in this paper to several relevant alternative approaches.  The values listed indicate how the quantity scales in terms of the number of samples $K$, number of pulses $M$, number of samples per pulse $N$, and maximum time shift considered $\tau_{\max}$.  Please see Sections \ref{sec:ltv-description} and \ref{sec:LTV-system-ID} for more details about these quantities and Section \ref{sec:noiseless-discussion} for more details about the comparison.  Other constants are left off for clarity, and $\text{poly}(\cdot)$ means polynomial scaling.  A `--' indicates that the column is not addressed by the approach.}
  \label{tab:summary-comparison}
\end{table*}

The remainder of this paper is organized as follows.  Section \ref{sec:ltv-description} sets up the model for LTV systems and specifies the response of these systems to a train of LFM pulses.  Section \ref{sec:LTV-system-ID} discusses the proposed processing scheme that first uses analog preprocessing on the received LFM pulses to setup a digital frequency recovery algorithm.  Section \ref{sec:noiseless-recovery} analyzes the proposed recovery algorithm performance from the noiseless LTV system response, while Section \ref{sec:noisy-estimation} analyzes the performance using a noisy system response.  We finish up with numerical experiments to verify the proposed procedure in Section \ref{sec:numerical-experiements} and conclude in Section \ref{sec:conclusion}.

\section{LTV System Description and Response}\label{sec:ltv-description}
An LTV system is an operator described by time shifts, frequency shifts, and complex scalings.  The LTV operator acts on \emph{probing waveforms} and produces a response that consists of time-shifted, frequency-shifted, and scaled copies of the probing waveform.  Each parameter triplet $(\tau_k, f_k, c_k)$ is associated with a \emph{target} or \emph{object}, owing to the physical interpretation of a radar scene or multipath communication system.  The objective is to identify the LTV operator by estimating each triplet $(\tau_k, f_k, c_k)$.  In causal systems, as considered in this paper, the time shifts are time delays, i.e., $\tau_k > 0$.  The system response is a superposition of the modified probing waveforms.  Fig. \ref{fg:block-diagram} shows a block diagram of the LTV system description where $x(t)$ denotes the probing waveform that illuminates the system and
$h(\tau,f,c) = c\cdotp x(t-\tau)e^{j2\pi ft}$
is the operator corresponding to a single target.  Assuming there are $K$ targets, the received signal\footnote{The received and transmitted signals are modeled as complex signals in this paper.  This would be implemented in practice using I/Q modulation \cite{dsp-proakis}.} is
\begin{equation}\label{eq:LTV-system}
  y(t) = \sum_{k=1}^{K}c_k x(t-\tau_k)e^{j2\pi f_k t} + \varepsilon(t)
\end{equation}
for $t \in [0,\mathcal{T}]$ where $\mathcal{T}$ is the processing interval (during which the parameters are assumed fixed) and $\varepsilon(t)$ is a noise term.
The time and frequency shifts are assumed to be limited\footnote{In practice, these limitations are set by the physical limits of the system and scenario.  For example, a radar system has limited sensitivity and can only detect returns from targets over some finite range and targets are limited to some finite velocity.}, i.e., $f_k \in (-f_{\max}, f_{\max})$ and $\tau_k \in (0,\tau_{\max})$ $\forall k$ such that $\max(\tau_k) < \tau_{\max}$ and $\max(|f_k|) < f_{\max}$.  The choice of $\mathcal{T}$ is important to ensure the parameters are (approximately) fixed over the processing interval.
A moving target that has a frequency shift due to the Doppler effect changes range, and hence the time shift changes, over time, so $\mathcal{T}$ must be small enough so that the change in time shift is negligible.

As a quick aside about an LTV operator that produces a response of the form \ref{eq:LTV-system}, we note that general LTV operators with a continuous spreading function, such as those considered in \cite{heckel-bolcskei-2013}, can be decomposed into a finite sum of discrete targets of the form \ref{eq:LTV-system} if the spreading function admits a Fourier decomposition.  We also note that the advantages of the method described in this paper are found when such a decomposition results in small $K$ because our identification results are a function of $K$.  Larger $K$ will require more resources.  However, in many applications $K$ is small or can be well-approximately by a small number of targets.

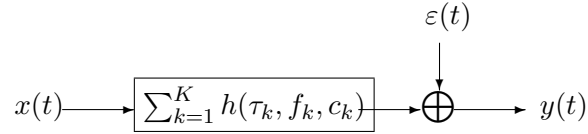
\begin{figure}[tp]
  \centering
    \setlength{\unitlength}{.25in}
  \begin{picture}(12,3)
    \put(0.0, 1.0){\makebox{$x(t)$}}
    \put(1.0, 1.1){\vector(1,0){1.5}}
    \put(2.5, 1.0){\framebox{$\sum_{k=1}^Kh({\tau_k,f_k,c_k})$}}
    \put(7.2, 1.1){\vector(1,0){1.3}}
    \put(8.5, 1.0){\makebox{\large$\bigoplus$}}
    \put(8.9, 2.5){\vector(0,-1){1.0}}
    \put(8.6, 2.9){\makebox{$\varepsilon(t)$}}
    \put(9.2, 1.1){\vector(1,0){1.5}}
    \put(11.0, 1.0){\makebox{$y(t)$}}
  \end{picture}
  \caption{The LTV system described by \eqref{eq:LTV-system}.  The system introduces time shifts $\tau_k$, frequency shifts $f_k$, and complex scalings $c_k$ that modify the probing waveform $x(t)$.  Additive noise $\varepsilon(t)$ corrupts the signal returns to produce the received signal $y(t)$.}
  \label{fg:block-diagram}
  
\end{figure}

\subsection{Probing Waveform: Linear Frequency Modulated Pulses}\label{sec:LFM-waveform}
The probing waveform must be designed to produce an identifiable system response.  The important property of a probing waveform is that it provides a sufficient number of degrees of freedom in the system response.  For example, a pure tone does not provide any information about the time shift because the phase of the return is corrupted by the unknown phase imparted by the target.

The waveforms considered are linear frequency modulated (LFM) pulses, or windowed \emph{chirps}, which enjoy wide use in radar applications \cite{skolnik08}.  These waveforms are complex sinusoids in which the frequency sweeps linearly in time across some bandwidth.
Consider a train of $M$ such pulses,
\begin{align*}
  x(t) &= \sum_{m=0}^{M-1} x_m(t-mT)
\end{align*}
where each pulse is a windowed LFM waveform with sweep rate $f_c^m$ and frequency offset $f_0^m$
\begin{align}
  x_m(t) &= \mathrm{e}^{j2\pi ( f_c^{m} t^2 + f_0^{m} t) }g(t) \label{eq:lfm-pulse-m}
\end{align}
with a square window function
\begin{equation*}
  g(t) = \begin{cases} 1,& 0 \leq t \leq T_p \\ 0,& \text{otherwise.} \end{cases}
\end{equation*}
Square windows allow a clear analysis, though other window functions are possible, e.g., a continuous window that tapers at each end, but beyond the scope of the current analysis.  The pulse duration is $T_p$, and $T$ is the pulse repetition interval (PRI).  The time-frequency plot of an example pulse at baseband
is shown in Fig. \ref{fig:lfm-waveform}.  The time-frequency characterization, however, does not completely capture the spectral content of the LFM pulse because the total occupied bandwidth is slightly larger than the difference between the starting and ending frequencies due to the windowing in time of the pulse.  The Fourier transform of \eqref{eq:lfm-pulse-m} is
\begin{equation}\label{eq:LFM-Fourier}
  X(f) = \mathcal{F}\{x_m(t)\} = \int_{0}^{T_p}\mathrm{e}^{j2\pi(f_c^mt^2 + (f_0^m - f)t)}dt.
\end{equation}
The integral in \eqref{eq:LFM-Fourier} is difficult to evaluate due to the quadratic term in the exponent.  The integral can be formulated in terms of Fresnel integrals and numerically evaluated as in \cite{klauder-bstj-1960}.  The upshot is that most of the energy is contained in a bandwidth of approximately $2f_c^mT_p$ with some energy leaking outside of this bandwidth.  Because the pulse duration is $T_p$, the time-bandwidth product of a single pulse is $2f_c^mT_p^2$.
Other chirp, or chirp-like, pulses are possible such as a linear frequency-stepped pulse as analyzed in \cite{friedlander-parametric2012}.  In this case, $f_c^{m} = 0$ and $f_0^{m} = f_0 + \delta f\cdotp m$ where $\delta f$ is the frequency step for each pulse.

\begin{figure}[tp]
  \centering
  \begin{subfigure}{0.49\columnwidth}
    \centering
    \includegraphics[width=0.99\columnwidth]{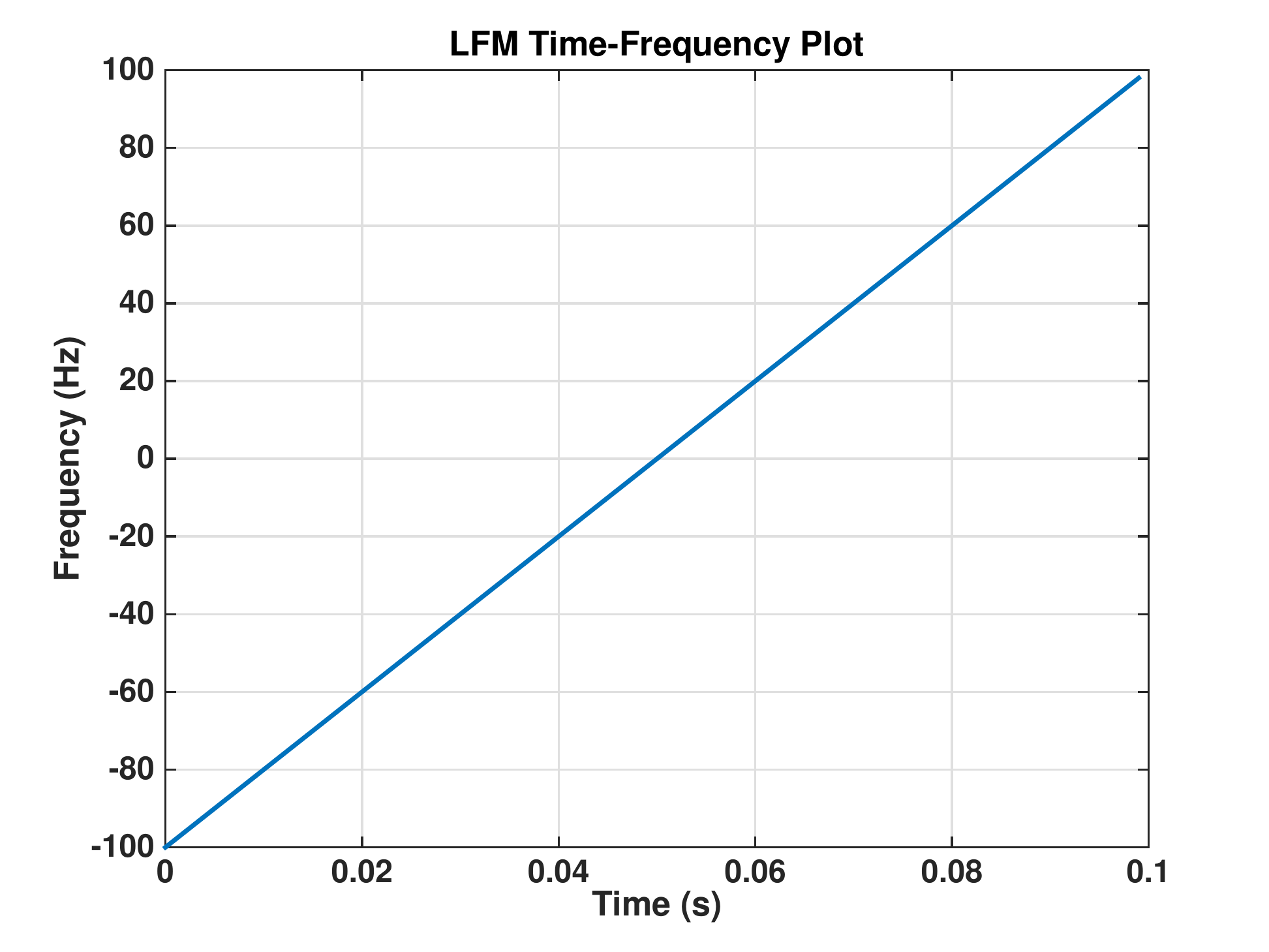}
    \caption{Time-frequency plot.}
    \label{fig:lfm-waveform}
  \end{subfigure}
  \begin{subfigure}{0.49\columnwidth}
    \centering
    \includegraphics[width=0.99\columnwidth]{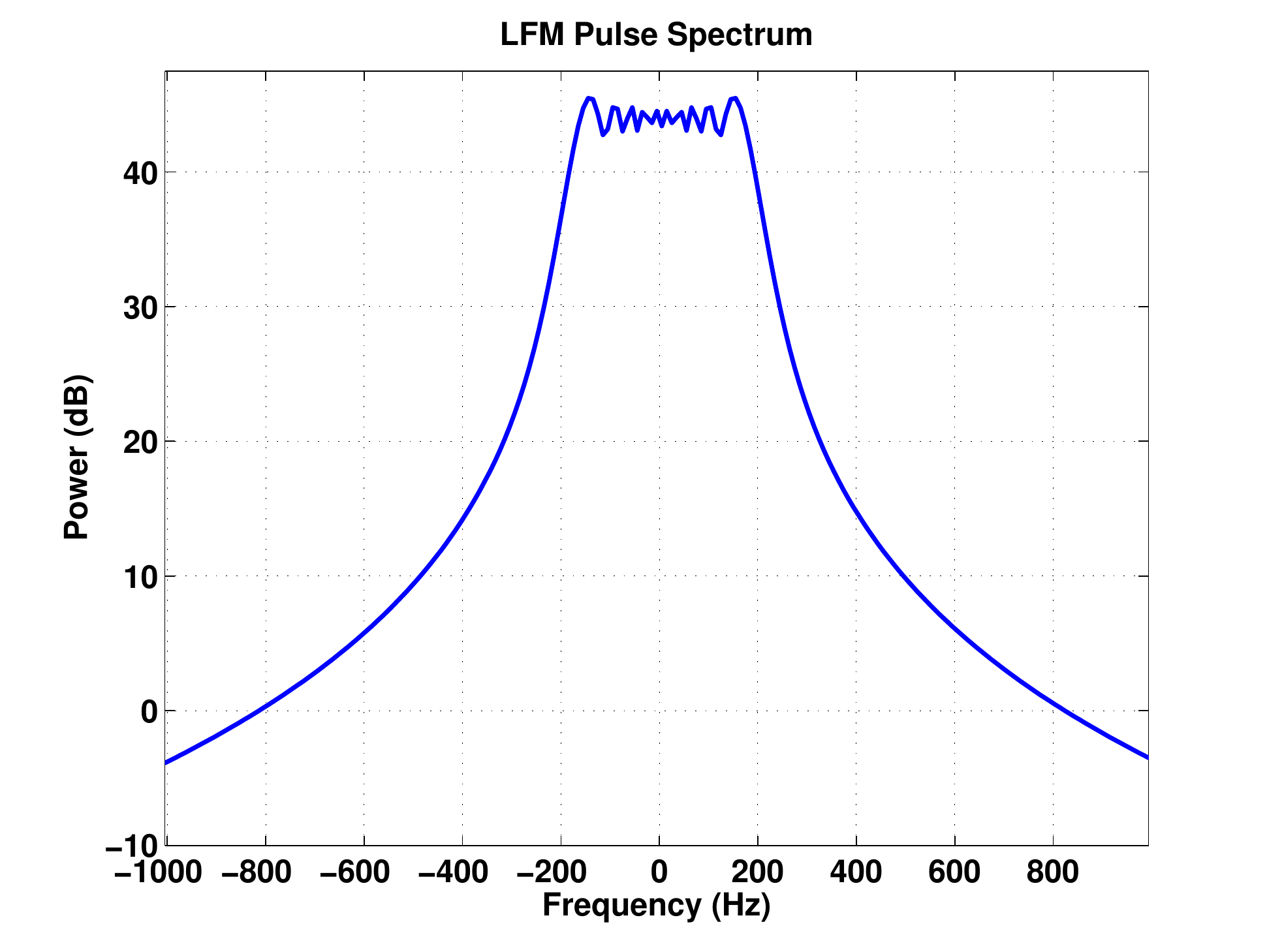}
    \caption{Fourier transform.}
    \label{fg:lfm-spectrum}
  \end{subfigure}
  \caption{The time-frequency plot and the Fourier transform characterize the spectral content of an LFM pulse with $f_0 = -100$ Hz, $f_c = 2000$ Hz/s, and $T_p = 0.1$ s. Most energy is contained in a bandwidth of $\approx 2f_cT_p$}
  \label{fg:lfm-frequency-characterization}
\end{figure}

\subsection{LTV System Response to LFM Pulses}
The response of the LTV operator to these LFM pulses is
\begin{align*}
  y(t) &= \sum_{k=1}^{K}\sum_{m=0}^{M-1}c_k x_m(t-\tau_k-mT)\mathrm{e}^{j2\pi f_k t} + \varepsilon(t) = \sum_{m=0}^{M-1}y_m(t)
\end{align*}
where
\begin{equation}\label{eq:received-pulse-m}
  y_m(t) = \sum_{k=1}^{K}c_k x_m(t-\tau_k-mT)\mathrm{e}^{j2\pi f_k t} + \varepsilon_m(t)
\end{equation}
is the received signal for the $m^{th}$ pulse with 
\begin{equation*}
   \varepsilon_m(t) = \begin{cases}  \varepsilon(t),& mT \leq t \leq (m+1)T \\ 0,& \text{otherwise} \end{cases}
\end{equation*}
the windowed noise process over the time interval of the $m^{th}$ pulse.
The window $g(t)$ limits the temporal extent of $x_m(t)$, so $y_m(t)$ is guaranteed to be zero outside of the interval
$mT \leq t \leq T_p+\tau_{\max}+mT$
for $m = 0,\ldots,M-1$.
If $T \geq T_p + \tau_{\max}$, then the received pulses $y_{m_1}(t)$ and $y_{m_2}(t)$ do not overlap for $m_1 \neq m_2$.  We therefore have a \emph{guard interval} of $T_g = T - T_p$ during which no signal is transmitted, and the requirement for non-overlapping received pulses is $T_g \geq \tau_{\max}$.  We also restrict the \emph{measurement interval} for the $m^{th}$ pulse to
$$mT + \tau_{\max} \leq t \leq mT + T_p$$
to ensure all returns from the $m^{th}$ pulse, and only those from the $m^{th}$ pulse, are present.  We denote the measurement interval as $T_o = T_p - \tau_{\max}$.  Fig.~\ref{fg:timing-diagram} provides a pictorial summary of the timing requirements.

\begin{figure}[tp]
  \centering
    \setlength{\unitlength}{.25in}
  \begin{picture}(12,2)
    \put(0.0, 1.0){\vector(1,0){11.0}}
    \put(11.2, 0.8){\makebox{$t$}}
    \put(0.0, 0.8){\line(0,1){0.4}}
    \put(-0.1, 0.2){\makebox{$0$}}
    \put(2.0, 0.8){\line(0,1){0.4}}
    \put(1.8, 0.2){\makebox{$\tau_{\max}$}}
    \put(6.0, 0.8){\line(0,1){0.4}}
    \put(5.8, 0.2){\makebox{$T_p$}}
    \put(10.0, 0.8){\line(0,1){0.4}}
    \put(9.8, 0.2){\makebox{$T$}}
    \put(3.7, 1.3){\makebox{$T_o$}}
    \put(3.5, 1.5){\vector(-1,0){1.2}}
    \put(4.5, 1.5){\vector( 1,0){1.2}}
    \put(7.8, 1.3){\makebox{$T_g$}}
    \put(7.6, 1.5){\vector(-1,0){1.2}}
    \put(8.5, 1.5){\vector( 1,0){1.2}}
  \end{picture}
  \caption{Timing diagram of a pulse, the measurement period, and the guard period.  We require that $T_g = T-T_p \geq \tau_{\max}$ and $T_p >\tau_{\max}$.  The measurement period $T_o$ has a lower bound (described later) that depends on the number of targets and the noise power.}
  \label{fg:timing-diagram}
  
\end{figure}
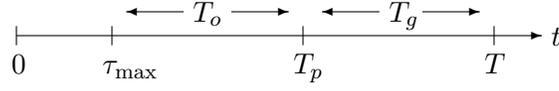

The received pulse \eqref{eq:received-pulse-m} is a superposition of frequency-offset chirps.  Substituting \eqref{eq:lfm-pulse-m} into \eqref{eq:received-pulse-m} yields
\begin{equation*}
  y_m(t) = \sum_{k=1}^{K}c_k \mathrm{e}^{j2\pi \theta_k^{m}}\mathrm{e}^{j2\pi \nu_k^{m}t} \mathrm{e}^{j2\pi(f_c^{m}t^2 + f_0^{m} t) }g(t-\tau_k) + \varepsilon_m(t)
\end{equation*}
where
\begin{align}
  \theta_k^{m} &= f_c^{m}\tau_k^2 \label{eq:phase-mapping-positive} \\
  \nu_k^{m} &= f_k - 2f_c^{m}\tau_k \label{eq:freq-mapping-positive}
\end{align}
are, respectively, the phase offset and frequency offset of the chirp associated with the $k^{th}$ target return and determined by the time shift and frequency shift of the $k^{th}$ target.\footnote{Note that there is a phase term $f_0^m\tau_k$ that has been subsumed into the complex scaling $c_k$ assuming that $f_0^m = f_{RF}$, $\forall m$, is the RF center frequency common to every LFM pulse.  We present the analysis at baseband while acknowledging that $c_k$ has a component due to  $f_{RF}$.}
The \emph{pure chirp} term, $\mathrm{e}^{j2\pi(f_c^mt^2 + f_0^m t)}$, does not depend on any parameter of the $k^{th}$ target.

\section{Identification of LTV Systems}\label{sec:LTV-system-ID}
The goal in LTV system identification is to identify (i.e., estimate or recover) the composite time shifts, frequency shifts, and amplitude scalings by probing the system with known waveforms that provide sufficient diversity in the system response.  In the case of LFM probing waveforms, the identification is split into three parts: 1) analog preprocessing and sampling of the received signal, 2) frequency estimation (the parameters in \eqref{eq:phase-mapping-positive} and \eqref{eq:freq-mapping-positive}), and 3) matching of the recovered frequencies to determine the time and frequency shifts (see Fig. \ref{fg:processing-block-diagram}).

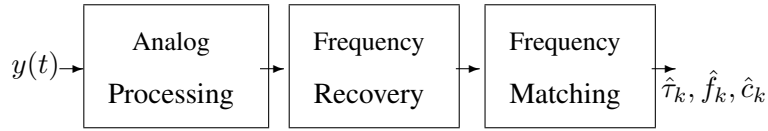
\begin{figure}[tp]
  \centering
    \setlength{\unitlength}{.25in}
  \begin{picture}(14,2.5)
    \put(0.0, 1.0){\makebox{$y(t)$}}
    \put(1.0, 1.1){\vector(1,0){0.5}}
    \put(1.5, 1.0){\framebox{\begin{tabular}{c}\small Analog \\ Processing\end{tabular} }}
    \put(5.2, 1.1){\vector(1,0){0.5}}
    \put(5.8, 1.0){\framebox{\begin{tabular}{c}\small Frequency \\ Recovery\end{tabular} }}
    \put(9.3, 1.1){\vector(1,0){0.5}}
    \put(9.9, 1.0){\framebox{\begin{tabular}{c}\small Frequency \\ Matching\end{tabular} }}
    \put(13.4, 1.1){\vector(1,0){0.5}}
    \put(13.6, 0.5){\makebox{$\hat{\tau}_k, \hat{f}_k, \hat{c}_k$}}
  \end{picture}
  \caption{Block diagram describing the identification of an LTV system.  The received signal $y(t)$ is dechirped and sampled.  The samples are used to recover the constituent sinusoids in the processed returns.  Finally, the recovered frequencies are matched to recover the LTV system description.}
  \label{fg:processing-block-diagram}
  
\end{figure}

\subsection{Analog Receiver Processing}
The analog preprocessing of the received returns is shown in Fig.~\ref{fg:receiver-diagram}.
The received signal is first \emph{dechirped} to remove the pure chirp component.  The dechirped received signal for the $m^{th}$ pulse is
\begin{align}\label{eq:preprocessing}
  \tilde{y}_m(t) &= \mathrm{e}^{-j2\pi (f_c^{m} t^2 + f_0^{m} t)}y_m(t) \\
  &= \sum_{k=1}^{K}c_k \mathrm{e}^{j2\pi \theta_k^{m}} e^{j2\pi \nu_k^{m} t}g(t-\tau_k) + \tilde{\varepsilon}_m(t) \notag
\end{align}
where $\tilde{\varepsilon}_m(t) = \mathrm{e}^{-j2\pi (f_c^{m} t^2 + f_0^{m} t)}\varepsilon_m(t)$.  The dechirped signal is a sum of complex sinusoids with frequencies $\nu_k^{m}$ and phases $\theta_k^m$.  The noise term $\tilde{\varepsilon}_m(t)$ is phase modulated by the dechirping process, but its magnitude is unaffected.  For example, if the noise process $\varepsilon_m(t)$ is independent complex (circularly symmetric) Gaussian noise, then the dechirped noise process has the same statistics.  The magnitude $|\tilde{\varepsilon}_m(t)|$ is unaffected and the phase remains uniformly distributed in $[0, 2\pi)$.

The dechirped signal is then sampled over the measurement interval so that the output measurements are
\begin{equation}\label{eq:measurement-samples}
  \tilde{y}_m[n] = \tilde{y}_m(nT_s) = \sum_{k=1}^{K}c_k \mathrm{e}^{j2\pi \theta_k^{m}} \mathrm{e}^{j2\pi \nu_k^{m} nT_s} + \tilde{\varepsilon}_m[n]
\end{equation}
for $n = 0,...,N-1$ with $T_s$ the sampling period and $\tilde{\varepsilon}_m[n] = \tilde{\varepsilon}_m(nT_s)$.  We first note that the Nyquist criterion (to prevent aliasing of the different sinusoids in \eqref{eq:measurement-samples}) requires that $T_s\cdotp \max(\nu_k^{m}) \leq \frac{1}{2}$, or
\begin{equation}\label{eq:sampling-rate}
  f_s \geq 2(f_{\max} + 2f_c^{m}\tau_{\max})
\end{equation}
where $f_s = \frac{1}{T_s}$ is the sampling rate.  Note that this Nyquist criterion does not depend directly on the bandwidth of the probing LFM waveform, which is approximately $2f_c^mT_p$.  Generally, $T_p \gg \tau_{\max}$, so the sampling constraint on $f_s$ is much smaller than the bandwidth of the LFM waveform.  There is also an implicit relation between $f_{\max}$ and $T_p$ requiring that the product cannot be too large, i.e., $f_{\max}\cdotp T_p < \eta$ for some constant $\eta$.  The product is proportional to the distance traveled by a moving target during time $T_p$ and cannot be too large to satisfy the assumption of the parameters remaining fixed.

The dechirping \eqref{eq:preprocessing} converts the time and frequency shifted LFM pulses into complex sinusoids with frequency and phase determined by the time and frequency shift.
We can write $c_k = |c_k|e^{j2\pi\phi_k}$ and \eqref{eq:preprocessing} becomes
\begin{align}\label{eq:positive-pulse-measurements}
  \tilde{y}_m[n] &= \sum_{k=1}^{K}|c_k| \mathrm{e}^{j2\pi\psi_k^m} \mathrm{e}^{j2\pi \nu_k^m nT_s} + \tilde{\varepsilon}_m[n]
\end{align}
with $\psi_k^m = \phi_k + \theta_k^m = \phi_k + f_c^{m}\tau_k^2$.  We have transformed a chirp estimation problem into a sinusoid estimation problem in which the frequencies and phases of the sinusoids are parametrically defined by the time shifts, frequency shifts, and phase offsets of the LTV operator.  The problem remains of recovering the parameters of the sinusoids and solving for the LTV time shifts and frequency shifts.

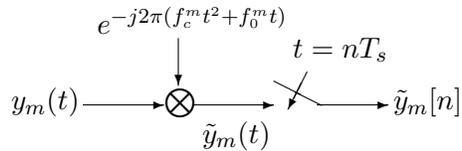
\begin{figure}[tp]
  \centering
    \setlength{\unitlength}{.25in}
  \begin{picture}(9.5,3.0)
    \put(0.0, 0.7){\makebox{$y_m(t)$}}
    \put(1.5, 0.8){\vector(1,0){1.7}}
    \put(3.2, 0.7){\makebox{$\bigotimes$}}
    \put(3.5, 2.2){\vector(0,-1){1.0}}
    \put(1.8, 2.3){\makebox{$e^{-j2\pi(f_c^m t^2 + f_0^m t)}$}}
    \put(3.8, 0.8){\vector(1,0){1.7}}
    
    \put(6.5, 0.8){\line(-2,1){1}}
    \put(6.2, 1.5){\vector(-1,-2){0.4}}
    \put(6.4, 0.8){\vector(1,0){1.5}}
    \put(5.9, 1.8){\makebox{$t = nT_s$}}
    
    \put(4.0, 0.0){\makebox{$\tilde{y}_m(t)$}}
    \put(8.0, 0.7){\makebox{$\tilde{y}_m[n]$}}
  \end{picture}
  \caption{The receiver analog processing chain.  The received signal is \emph{dechirped}, then sampled and sent to the digital processing stage.}
  \label{fg:receiver-diagram}
\end{figure}

\subsection{Digital Receiver Processing}
The dechirped and sampled signal $\tilde{y}_m[n]$ is the input to a digital processing stage that ultimately recovers the LTV system description by recovering the $K$ target triplets $(\tau_k, f_k, c_k)$.  The digital recovery proceeds in two steps (see Fig. \ref{fg:processing-block-diagram}).  The first step is recovery of the frequencies \eqref{eq:freq-mapping-positive} and phases \eqref{eq:phase-mapping-positive} from each pulse $m=1,\ldots,M$.  The parametric relationship between these frequencies and phases and the time shifts and frequency shifts is then exploited to recover the target triplets, as described in the following sections.  The case of noise free samples is analyzed first, followed by the case of noise corrupted samples.  The procedure is more clearly explained without considering noise in Section~\ref{sec:noiseless-recovery}, and the extension to the case of noisy measurements is straightforward in Section~\ref{sec:noisy-estimation}.

\section{Identification of LTV Systems from Uncorrupted Samples}\label{sec:noiseless-recovery}
We begin by analyzing the noise-free case with ${\varepsilon}(t) = 0$, and by extension $\tilde{\varepsilon}_m[n] = 0$.  The noise-free case is useful for two reasons.  First, it gives a baseline for performance in terms of several benchmarks considered.  Second, it provides an intuitive understanding of the identification procedure.  Given the filtered, dechirped, and sampled measurements \eqref{eq:measurement-samples}, we first recover the frequencies and phases of the resulting sinusoids and then extract the time and frequency shifts from these frequencies and phases.

\subsection{Recovery of Time Shifts and Frequency Shifts}
The first stage of the digital processing is recovery of the frequencies and phases from the noise-free samples \eqref{eq:measurement-samples}, i.e., with $\tilde{\varepsilon}[n] = 0$.  Specifically, $\psi_k^m$ and $\nu_k^m$ are recovered from \eqref{eq:positive-pulse-measurements}.
The samples $\tilde{y}_m[n]$ from each pulse are the input, and some frequency recovery algorithm is used to recover the constituent frequencies.  For completeness, we summarize the Kumaresan-Tufts (KT) algorithm \cite{tufts-kumaresan-1982} as one possible frequency recovery technique but emphasize that other parametric techniques, such as MUSIC or ESPRIT, could be easily used in its place.  Additionally, we could use non-parametric Fourier-based techniques.

The KT algorithm solves the prediction equation
$\mathbf{y} + \mathbf{Y}\mathbf{h} = \mathbf{0}$
where $\mathbf{h} = [h[1], \ldots, h[L]]^T$ are the coefficients of the predictor filter
\begin{equation*}
  H(z) = z^L + h[1]z^{L-1} + \cdots + h[L-1]z + h[L] = \prod_{i=1}^L(z-\hat{z}_i),
\end{equation*}
$L$ is the \emph{predictor order} of the filter, and $\hat{z}_i$ are the roots of the polynomial.
Using the forward-backward predictor matrix
\begin{equation*}
    \mathbf{Y} =
    \begin{bmatrix}
      y[L]     & y[L-1] & \cdots & y[1] \\
      \vdots & \vdots & \vdots & \vdots \\
      y[N-1] & y[N-2] & \cdots & y[N-L] \\
      y^*[2]  & y^*[3]  & \cdots & y^*[L+1] \\
      \vdots & \vdots & \vdots  & \vdots \\
      y^*[N-L+1] & y^*[N-L+2]  & \cdots & y^*[N]
     \end{bmatrix},
\end{equation*}
and forward-backward predictor vector $\mathbf{y}$
$$\mathbf{y} = \begin{bmatrix} y[L+1] & \cdots & y[N] & y^*[1] & \cdots & y^*[N-L] \end{bmatrix}^T,$$
the prediction equation solution is
$$\mathbf{h} = -(\mathbf{Y}^H\mathbf{Y})^{-1}\mathbf{Y}^H\mathbf{y} = -\mathbf{R}^{-1}\mathbf{r}$$
where
$\mathbf{R} = \mathbf{Y}^H\mathbf{Y}$
is the data correlation matrix and
$\mathbf{r} = \mathbf{Y}^H\mathbf{y}$
is the data correlation vector.  Notice that the structure of $\mathbf{Y}$ (and by extension $\mathbf{R}$) means they have rank $K$, the number of constituent sinusoids.

The prediction filter $H(z)$ has $L$ roots, denoted by $\hat{z}_i$, where $K$ roots lie on the unit circle and the rest reside inside the unit circle.  The recovered frequencies $\hat{\nu}_k$ are found from the $K$ roots on the unit circle
$$\hat{z}_k = e^{j2\pi\hat{\nu}_kT_s}.$$

The phases $\hat{\psi}_k^m$ and amplitudes $|\hat{c}_k|$ are recovered with the following least-squares problem.  Let $\mathrm{F}_{\hat{\nu}}$ be the matrix of sinusoids with the estimated frequencies $\hat{\nu}_k^m$ defined by
$[\mathrm{F}_{\hat{\nu}}]_{n,k} = e^{j2\pi\hat{\nu}_k^m n T_s}.$
Collecting the samples into a vector $\tilde{y}^m = [\tilde{y}^m[0],...,\tilde{y}^m[N-1]]^T$, the least-squares solution is
\begin{equation}\label{eq:least-squares-amplitude}
  \hat{\zeta} = \arg\min_{\zeta} ||\mathrm{F}_{\hat{\nu}}\zeta - \tilde{y}^m||_2^2
\end{equation}
with entries
$\hat{\zeta}_k = |\hat{c}_k|e^{j2\pi\hat{\psi}_k^m}.$
The procedure is summarized in Algorithm \ref{alg:frequency-phase-recovery}.

\begin{algorithm}
  \caption{KT algorithm for recovering the frequency and phase of complex sinusoids from sampled measurements}
  \label{alg:frequency-phase-recovery}
  \begin{algorithmic}[1]
    \STATE Data: $\tilde{y}_m[n]$
    \STATE Calculate coefficients $\mathbf{h} = -(\mathbf{Y}^H\mathbf{Y})^{-1}\mathbf{Y}^H\mathbf{y} = -\mathbf{R}^{-1}\mathbf{r}$.
    \STATE Find the roots on the unit circle, $|\hat{z}_k| = 1$, of $H(z)$.
    \STATE Calculate the frequencies $\hat{\nu}^m_k = \frac{T_s}{2\pi}\text{phase}(\hat{z}_k)$.
    \STATE Calculate $\hat{\psi}^m_k$ and $|\hat{c}_k|$ from the least-squares solution $\hat{\zeta}_k$.
  \end{algorithmic}
\end{algorithm}

Recovering $\psi_k^m$ and $\nu_k^m$ from a single LFM pulse (i.e., $M = 1$) is insufficient to determine the time and frequency shifts.  Recovery of $\nu_k^1$ provides a linear relationship \eqref{eq:freq-mapping-positive} to possible time shifts and frequency shifts that could produce a sinusoid of that frequency.  The recovered phase $\psi_k^1$ does not provide information about the time shift due to the unknown target phase $\phi_k$.  For example, the line with positive slope in Fig.~\ref{fg:lfm-single-target-constraints} shows the constraint imposed by \eqref{eq:freq-mapping-positive} in the case of a single target return ($K=1$).  There exist infinitely many solutions.

\begin{figure}[tp]
  \centering
    \includegraphics[width=0.50\columnwidth]{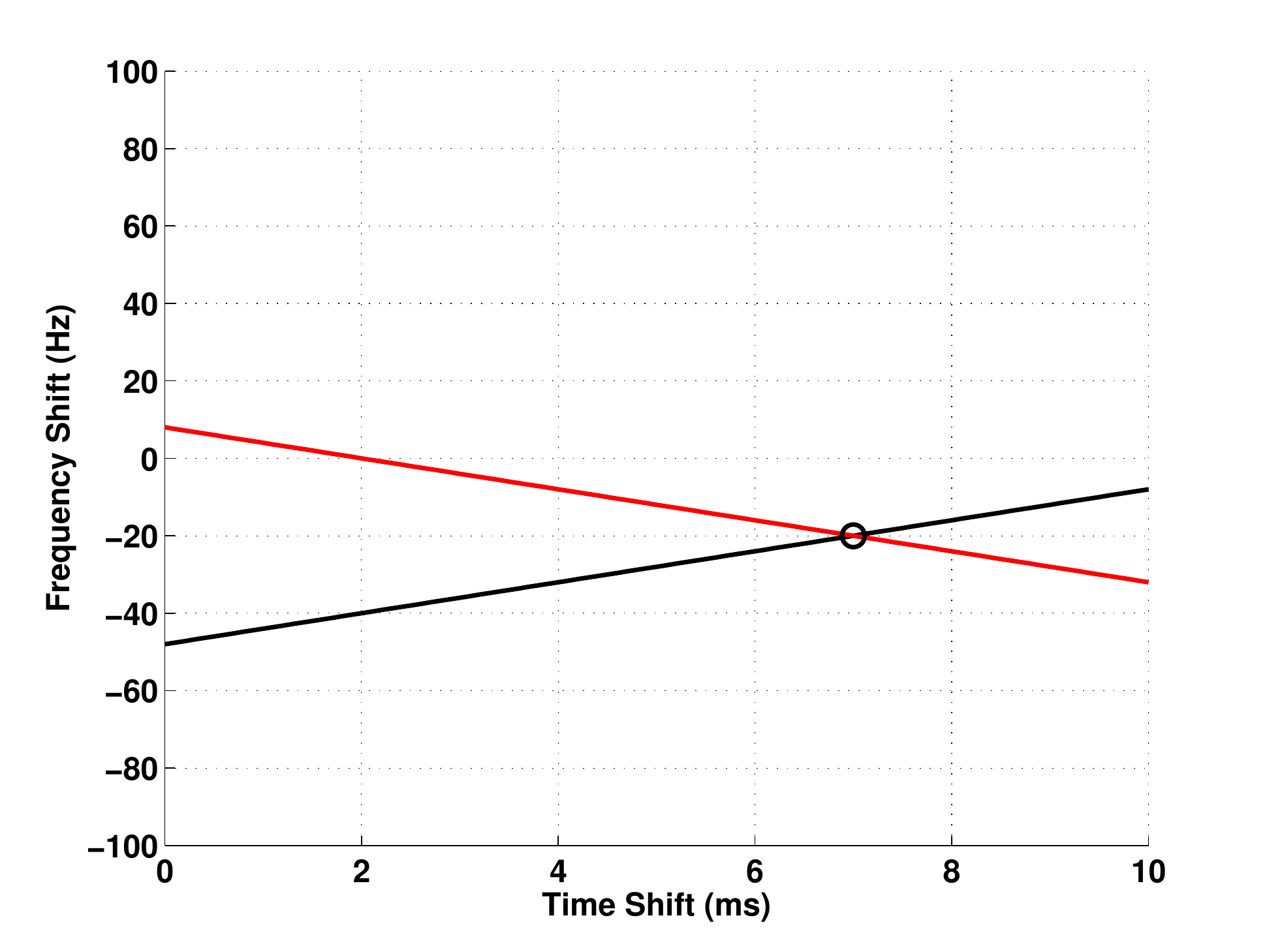}
  \caption{The constraints imposed by two LFM pulses probing an LTV system consisting of a single target with $\tau_{\max} = 0.01$s and $f_{\max} = 100$Hz.  The line with positive slope corresponds to the constraint imposed by the positive chirp, while the line with negative slope corresponds to the negative chirp.  The intersection of the lines is the time--frequency shift pair that explains the recovered frequency from each pulse.
  }
  \label{fg:lfm-single-target-constraints}
\end{figure}

However, we can send another LFM pulse with a different chirp rate, e.g., $f_c^2 = -f_c^1$.
The so-called \emph{positive} and \emph{negative} chirp in combination provide sufficient information to recover the time and frequency shift of the lone target.
The frequencies in this case are
\begin{align}
  \nu^{1}_k &= f_k - 2f_c^1\tau_k \ \ \ \text{and} \ \ \ \nu^{2}_k = f_k + 2f_c^1\tau_k\label{eq:pos-neg-freq-phase} 
\end{align}
and only one time--frequency shift pair satisfies both constraints simultaneously, shown by the intersection of the lines in Fig.~\ref{fg:lfm-single-target-constraints}.  The time shift and frequency shift are
$$\tau_1 = \frac{\nu_1^1 - \nu_1^2}{-4f_c^1} \ \ \ \text{and} \ \ \ f_1 = \frac{\nu_1^1 + \nu_1^2}{2}.$$
Note that the only requirement is $f_c^2 \neq f_c^1$ (for a single target in the absence of noise).

\begin{figure}[tp]
  \centering
  \begin{subfigure}{0.49\columnwidth}
    \centering
    \includegraphics[width=0.99\columnwidth]{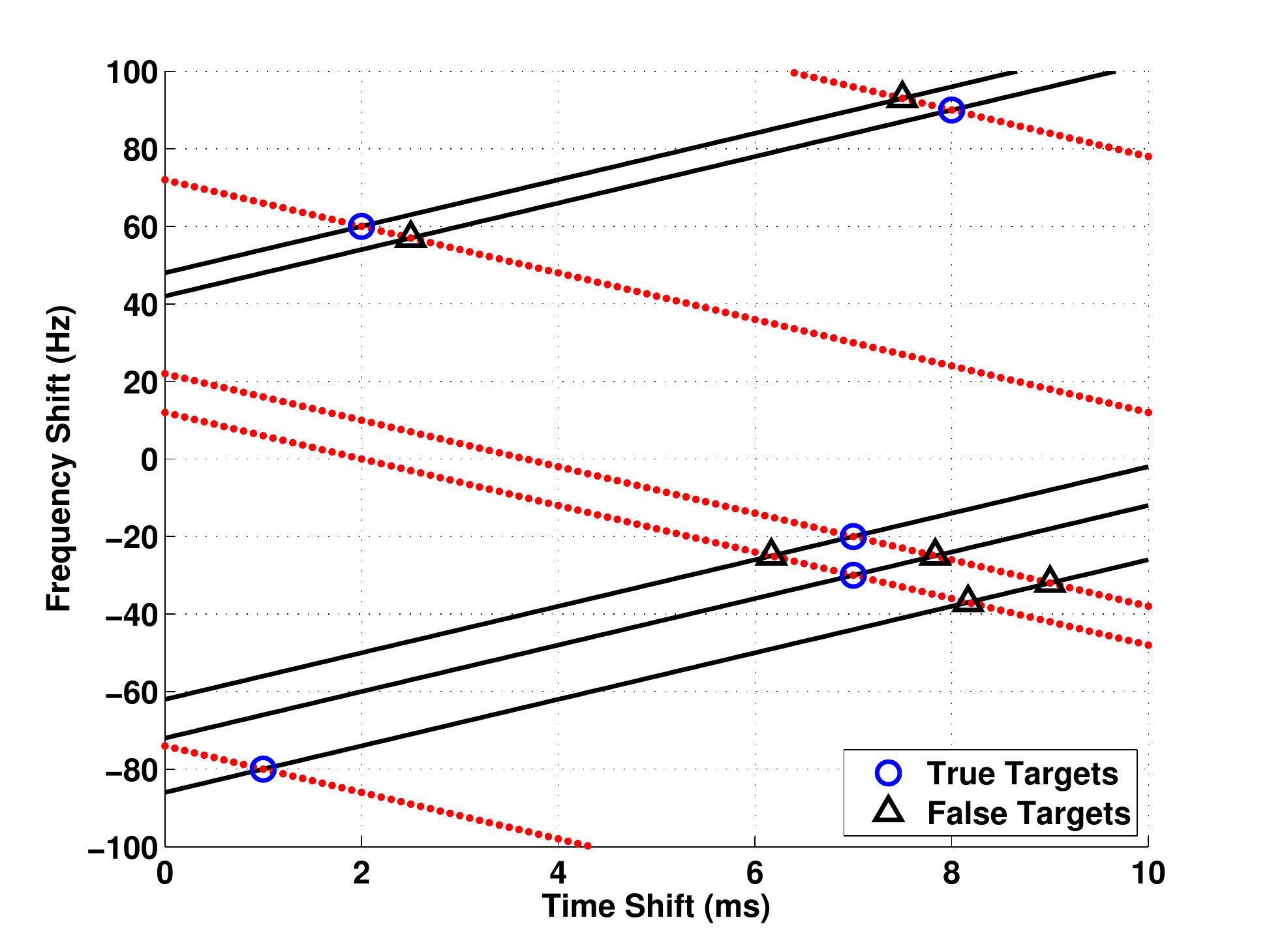}
    \caption{$f_c = \pm 3000$ Hz/s.}
    \label{fg:LFM-2k-BW}
  \end{subfigure}
  \begin{subfigure}{0.49\columnwidth}
    \centering
    \includegraphics[width=0.99\columnwidth]{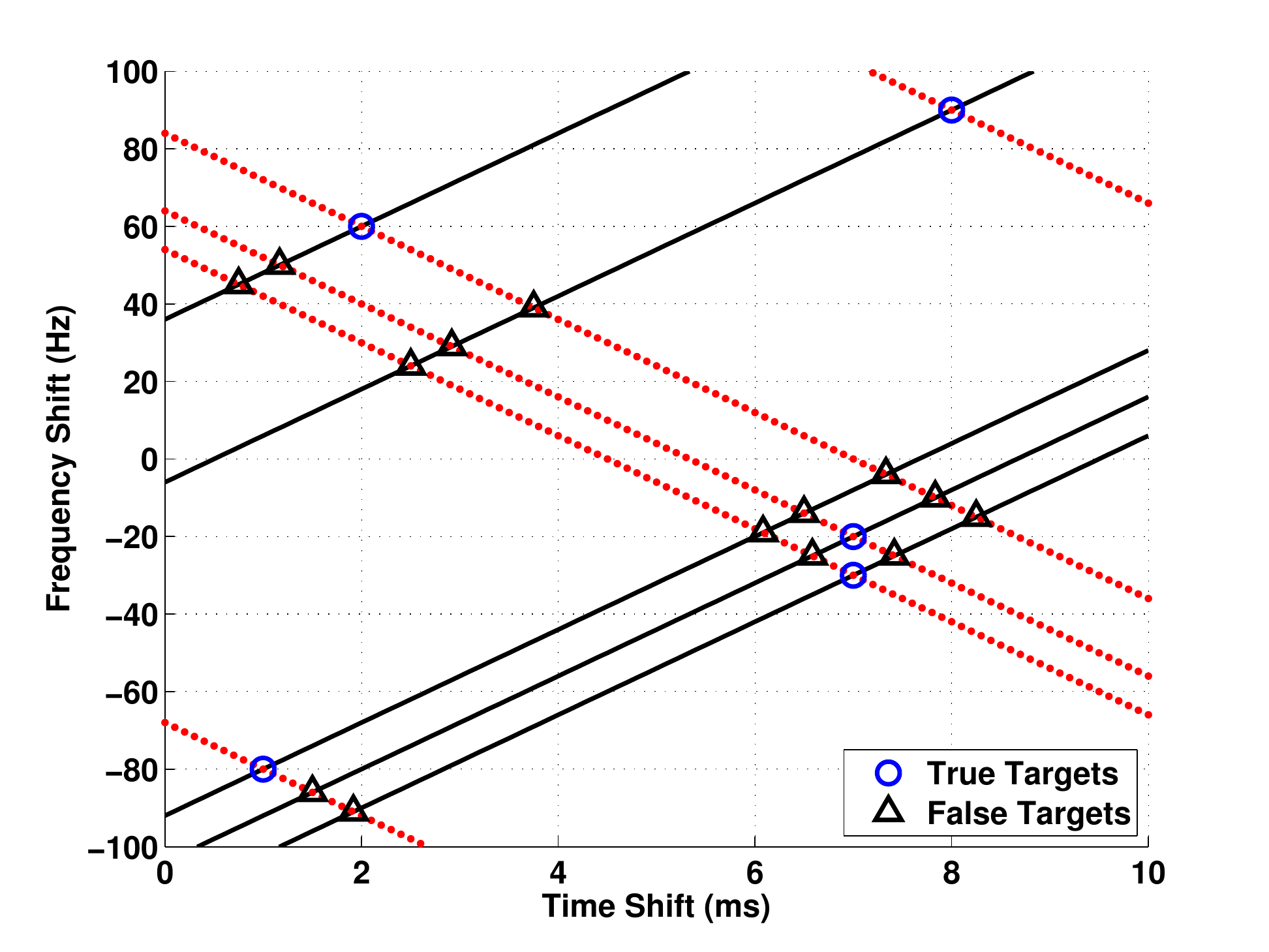}
    \caption{$f_c = \pm 6000$ Hz/s.}
    \label{fg:LFM-4k-BW}
  \end{subfigure}
  \caption{The constraints imposed by four LFM pulses for an example scene with five targets, $\tau_{\max} = 0.01$s and $f_{\max} = 100$Hz.  The lines with positive slope enforce constraints from the positive chirp, while the lines with negative slope enforce constraints from the negative chirp.  The line intersections show all possible time--frequency shift pairs that can explain the recovered frequencies.  Both sets of constraints are needed to disambiguate the true targets.}
  \label{fg:lfm-constraints}
\end{figure}

The recovery is more complicated for multiple targets.  For example, Fig. \ref{fg:LFM-2k-BW} shows the set of constraints provided by four LFM pulses ($M=4$) when there are five targets in the received signal.  Each point of intersection between any two lines satisfies the constraints imposed by the pair of LFM pulses.  
Recovery requires the use of the phase information and more LFM pulses with different chirp rates.  The procedure to recover multiple targets, which is one of the main novelties of this paper, is presented in the next subsection.

\subsection{Resolving Ambiguity Between LFM Pulses}\label{sec:matching-pairs}
The recovered sinusoid parameters from each pulse are
$\psi^{m}_k = \phi_k + f^m_c\tau_k^2$ and
$\nu^{m}_k = f_k - 2f^m_c\tau_k$
for targets $k = 1,\ldots, K$ and pulses $m = 1,\ldots,M$.  This is a system of equations in which each target contributes three unknowns ($\phi_k$, $\tau_k$, and $f_k$) and each pulse contributes one linear constraint and one quadratic constraint.  We will first concentrate on the frequency $\nu_k^m$ because it contains both the time shift and frequency shift and is linear.  For this constraint, each target contributes two unknowns ($f_k$ and $\tau_k$) while each pulse contributes a single constraint.

Collect the unknown parameters into a column vector
$$\mathbf{\beta} = [f_1,\ldots,f_K, \tau_1,\ldots, \tau_K]^T$$
and the recovered frequencies for the $m^{th}$ pulse into a column vector
$$\mathbf{\nu}^m = [\nu_1^m,\ldots,\nu_K^m]^T.$$
The constraints from the $m^{th}$ pulse can be written via the matrix equation
\begin{equation}\label{eq:single-pulse-frequency-constraint-matrix}
  \mathbf{A}^m\mathbf{\beta} = \mathbf{\nu}^m
\end{equation}
with the matrix
$$\mathbf{A}^m = \begin{bmatrix} \mathbf{I}_K & -2f_c^m \mathbf{I}_K\end{bmatrix} = \begin{bmatrix} 1 & -2f_c^m\end{bmatrix} \otimes \mathbf{I}_k$$
representing the constraints and where $\otimes$ denotes the Kronecker product.  The $K\times K$ identity matrix is denoted $\mathbf{I}_K$.  The matrix equation \eqref{eq:single-pulse-frequency-constraint-matrix} is underdetermined as there are $2K$ unknowns and $K$ equations.  We can add more constraints by sending more LFM pulses.  Each LFM pulse contributes a set of $K$ constraints of the form \eqref{eq:single-pulse-frequency-constraint-matrix}, while $\mathbf{\beta}$ must satisfy all sets of constraints simultaneously.  However, the frequencies $\nu_k^m$ are unordered and the relationship between the pulses is not immediate.  In other words, if the recovered frequencies are sorted in ascending order so that $\nu_1^m \leq \nu_2^m \leq \ldots \leq \nu_K^m$, then $\nu_k^m$ may not have been produced by the same target as $\nu_k^p$ for $m\neq p$.  This is most easily understood by examining the plot in Fig. \ref{fg:LFM-2k-BW}.  The solid black lines show the constraints imposed by an LFM pulse with $f_c = 3000$ Hz/s, and the dashed red lines show the constraints imposed by an LFM pulse with $f_c = -3000$ Hz/s.  We know that each target produced a single frequency, so any set of 5 intersecting points satisfying this constraint could explain the received signal.

We formalize this process by putting the constraints \eqref{eq:single-pulse-frequency-constraint-matrix} into a single equation.  The $MK \times 2K$ matrix $\mathbf{A}$ contains the constraints from each pulse and the $MK \times 1$ vector $\mathbf{\nu}$ contains the recovered frequencies from each pulse:
$$\mathbf{A} = \begin{bmatrix} \mathbf{A}^1 \\ \vdots \\ \mathbf{A}^M \end{bmatrix}
  = \begin{bmatrix} 1 & -2f_c^1 \\ \vdots & \vdots \\ 1 & -2f_c^M \end{bmatrix} \otimes \mathbf{I}_K = \mathbf{B} \otimes \mathbf{I}_K \ \ \ \text{,} \ \ \ \mathbf{\nu} = \begin{bmatrix} \mathbf{\nu}^1 \\ \vdots \\ \mathbf{\nu}^M\end{bmatrix}.$$
The parameter vector satisfies 
\begin{equation}\label{eq:frequency-linear-system}
  \mathbf{A}\beta = \mathbf{P}\mathbf{\nu}
\end{equation}
where the matrix $\mathbf{P} = \text{diag}(\mathbf{I}_K, \mathbf{P}_2, \ldots, \mathbf{P}_M)$ is a block diagonal matrix where $\mathbf{P}_m$ are permutation matrices that account for the unordered $\nu^m$s and match each frequency with the target that produced it.

As an illustrative example, consider Fig.~\ref{fg:LFM-2k-BW}.  The figure shows the linear constraints imposed by the frequency relations in \eqref{eq:pos-neg-freq-phase} for five targets (shown as blue circles).  Without knowledge of the true target locations (in time and frequency shift space), any set of five intersections between the red and black lines that also satisfies $\tau \leq \tau_{\max}$ could explain the recovered frequencies. 
In this example scene, there are a total of four valid explanations of the recovered frequencies.
The pairs of lines with no ambiguities can be matched with no further information.
For example, the true target at a time shift of $1$ms and frequency shift of $-80$Hz can be matched.
The ambiguous matchings are resolved through use of the recovered phase terms and a second LFM pulse.
We first explain how the phase information can be used.

Consider the case of two LFM pulses ($M=2$) and let $f_c^2 = -f_c^1$.  Index the frequencies for the first pulse with $k$ and for the second pulse with $\ell$ and note two relations:
\begin{align}
  \frac{\nu^{1}_k - \nu^{2}_{\ell}}{-4 f_c} &= \frac{f_k - f_{\ell}}{-4f_c^1} + \frac{\tau_k + \tau_{\ell}}{2}\label{eq:nu-expanded} \\
  \frac{\psi^{1}_k - \psi^{2}_{\ell}}{2f_c^1} &= \frac{\phi_k - \phi_{\ell}}{2f_c^1} + \frac{\tau_k^2 + \tau_{\ell}^2}{2} \label{eq:psi-expanded}
\end{align}

If $k$ and $\ell$ correspond to the same target, then $f_k = f_{\ell}$, $\tau_k = \tau_{\ell}$, $\phi_k = \phi_{\ell}$, and the frequencies and the phases satisfy
\begin{equation}\label{eq:omega-difference}
  \frac{\nu^{1}_k - \nu^{2}_k}{-4 f_c} = \tau_k \ \ \text{and} \ \ \frac{\psi^{1}_k - \psi^{2}_k}{2f_c^1} = \tau_k^2
\end{equation}
%
where both quantities only depend on $\tau_k$.  We check if the following relationship is satisfied
\begin{equation}\label{eq:phase-matching-condition}
  \left(\frac{\nu_k^1 - \nu_{\ell}^2}{-4 f_c^1}\right)^2 = \frac{\psi_k^1 - \psi_{\ell}^2}{2f_c^1}.
\end{equation}
If this condition holds true, then we declare that $k$ and $\ell$ describe the same target.  Otherwise, we continue checking.  This procedure is summarized in lines 1-10 of Algorithm \ref{alg:delay-Doppler-recovery}.

There is one more source of ambiguity caused by the unknown target phases $\phi_k$.  The \emph{hypothesized} time shift $\tau_h(k,\ell)$ is the time shift that results from \eqref{eq:omega-difference}
\begin{align}
  \tau_h(k,\ell) &= \frac{\nu^{1}_k - \nu^{2}_{\ell}}{-4 f_c^1} = \frac{f_{k}-f_{\ell}}{-4f_c^1} + \frac{\tau_k + \tau_{\ell}}{2}. \label{eq:candidate-tau}
\end{align}
The relationship \eqref{eq:phase-matching-condition} can be satisfied even if $k$ and $\ell$ do not correspond to the same target.  Substitute \eqref{eq:psi-expanded} into \eqref{eq:phase-matching-condition}
\begin{equation*}
  \tau_h(k,\ell)^2 = \frac{\psi^{1}_k - \psi^{2}_{\ell}}{2f_c^1} = \frac{\phi_k - \phi_{\ell}}{2f_c^1} + \frac{\tau_k^2 + \tau_{\ell}^2}{2},
\end{equation*}
and rearranging gives the condition on $\phi_k-\phi_{\ell}$ for ambiguity
\begin{equation*}
  \phi_k - \phi_{\ell} = 2f_c^1\tau_h(k,\ell)^2 - f_c^1\left(\tau_k^2 + \tau_{\ell}^2\right).
\end{equation*}

However, notice that if there is an ambiguity from the solution of \eqref{eq:frequency-linear-system} then there will be at least one other \emph{false target} in addition to the one just described.  In this case, target $k$ will contribute to the frequency and phase of the second LFM pulse ($m=2$) and target $\ell$ will contribute to the first LFM pulse ($m=1$).  The condition on the target phases in this case is
\begin{equation*}
  \phi_k - \phi_{\ell} = -2f_c^1\tau_h(\ell,k)^2 + f_c^1\left(\tau_k^2 + \tau_{\ell}^2\right).
\end{equation*}
We can use \eqref{eq:candidate-tau} to find the condition under which both of these conditions are ambiguous:
$|f_k-f_{\ell}| = 2f_c^1|\tau_k-\tau_{\ell}|,$
which is resolved if another pair of LFM pulses is sent with a different chirp rate $f_c^3 \neq f_c^1$.

\subsection{Recovery of Target Amplitude and Phase}\label{sec:dopp-delay-est}
With the time and frequency shifts recovered and ambiguities resolved,
the target phase $\phi_k$ is recovered via
$\hat{\phi}_k = \frac{1}{2}(\hat{\psi}^{1}_k + \hat{\psi}^{2}_k)$
and the amplitude is recovered via \eqref{eq:least-squares-amplitude}: $|\hat{c}_k| = |\hat{\zeta}_k|$.

\begin{algorithm}
  \caption{Algorithm for recovering the time and frequency shifts from the measurements}
  \label{alg:delay-Doppler-recovery}
  \begin{algorithmic}[1]
    \STATE Data: $\tilde{y}_m[n]$
    \STATE Use Algorithm \ref{alg:frequency-phase-recovery} to find the frequencies $\hat{\nu}_k^m$ and phases $\hat{\psi}_k^m$ for $k = 1,\ldots,K$ and $m=1,\ldots,M$
    \STATE Find possible target time and frequency shifts by solving \eqref{eq:single-pulse-frequency-constraint-matrix} for $m=1,2$
    \FOR{$k=1,...,K$}
      \FOR{$\ell=1,...,K$}
        \STATE Calculate delay hypothesis $\tau_h(k,\ell)$.
        \IF {$0 \leq \tau_h(k,\ell) \leq \tau_{\max}$ and $\tau_h(k,\ell)^2 = \frac{\psi_k^1 - \psi_{\ell}^2}{2f_c^1}$}
        \STATE Declare $k$ possibly matched with $\ell$
        \ENDIF
      \ENDFOR
    \ENDFOR
    \STATE Detect ambiguous matchings where a sinusoid $k$ has multiple matchings $\ell$, and vice-versa. Repeat 1-10 for $m=3,4$ to resolve the ambiguities.
    \STATE Calculate the recovered frequency shift $\hat{f}_k = \frac{1}{2}\hat{\nu}^{1}_k + \hat{\nu}^{2}_k.$
    \STATE Calculate the recovered time shift $\hat{\tau}_k = \frac{1}{4f_c^1}\hat{\nu}^{1}_k - \hat{\nu}^{2}_k.$
    \STATE Calculate the recovered target phase $\hat{\phi}_k = \frac{1}{2}\hat{\psi}^{1}_k + \hat{\psi}^{2}_k.$
    \STATE Calculate the recovered target amplitude $|\hat{c}_k| = |\hat{\zeta}_k|.$
  \end{algorithmic}
\end{algorithm}

\subsection{Sufficient Conditions for Perfect Noiseless Recovery}
The following theorem, first presented in \cite{harms-camsap-2013}, establishes that we can perfectly recover the time and frequency shifts if we transmit four LFM pulses, i.e., $M=4$.

\begin{theorem}[Perfect Recovery of Time Shifts and Frequency Shifts]\label{thm:perfect-recovery-noiseless}
  For a given $\tau_{\max}$ and $f_{\max}$ and $M=4$, choose $f_c^2 = -f_c^1$ and $f_c^4 = -f_c^3$ such that $f_c^1 \neq f_c^3$ and satisfying Lemma~\ref{lemma:ambiguous-phase} in Appendix \ref{sec:ambiguous-phase-terms}.  Take the samples $\tilde{y}_m[n]$ for $m=1,2,3,4$ with $f_s$ satisfying \eqref{eq:sampling-rate} and the number of samples per pulse $N$ satisfying
$N \geq K + 1.$
Then Algorithms \ref{alg:frequency-phase-recovery} and \ref{alg:delay-Doppler-recovery} perfectly recover the time shifts $\tau_k$, frequency shifts $f_k$, and (complex) scalings $c_k$.

\end{theorem}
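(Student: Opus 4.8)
The plan is to prove the theorem by verifying that each of the three processing stages in Fig.~\ref{fg:processing-block-diagram} is \emph{exact} when $\tilde\varepsilon_m[n]=0$, and then composing them. The dechirping computation leading to \eqref{eq:positive-pulse-measurements} has already reduced the problem to the following: from the $N$ noiseless samples $\tilde y_m[n]=\sum_{k=1}^K |c_k|e^{j2\pi\psi_k^m}e^{j2\pi\nu_k^m nT_s}$ of each of the four pulses, recover the $K$ frequencies $\nu_k^m$ and phases $\psi_k^m$; then match the frequencies across pulses to the targets that produced them; and finally invert the parametric relations \eqref{eq:phase-mapping-positive}--\eqref{eq:freq-mapping-positive} to obtain $(\tau_k,f_k,c_k)$. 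I would establish these three stages in order.

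Stage one (Algorithm~\ref{alg:frequency-phase-recovery}). I would argue that the Kumaresan--Tufts estimator recovers the frequencies and phases with no error in the noiseless case. The Nyquist condition \eqref{eq:sampling-rate} guarantees that the $K$ complex exponentials are distinct and unaliased on the unit circle, so the data vector spans a genuine $K$-dimensional signal subspace and the forward--backward data matrix $\mathbf{Y}$ has rank exactly $K$; the requirement $N\ge K+1$ is what makes this rank-$K$ structure fully observed. Because every signal root satisfies $|\hat z_k|=1$, the forward and backward prediction equations are mutually consistent, the noiseless system $\mathbf{y}+\mathbf{Y}\mathbf{h}=\mathbf{0}$ is solvable, and by the Kumaresan--Tufts result \cite{tufts-kumaresan-1982} the minimum-norm predictor places exactly $K$ roots of $H(z)$ on the unit circle at the true $z_k=e^{j2\pi\nu_k^m T_s}$, the remaining roots falling strictly inside. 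The frequencies $\hat\nu_k^m$ are then exact, and the least squares \eqref{eq:least-squares-amplitude} on the now-known sinusoidal basis returns the exact amplitudes and phases $\hat\psi_k^m$.

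Stage two (matching, lines~1--11 of Algorithm~\ref{alg:delay-Doppler-recovery}) is where the real work lies and where I expect the main obstacle. With $f_c^2=-f_c^1$, relations \eqref{eq:nu-expanded}--\eqref{eq:psi-expanded} show that if indices $k,\ell$ from pulses $1$ and $2$ come from the \emph{same} target then \eqref{eq:omega-difference} holds and the test \eqref{eq:phase-matching-condition} is satisfied, so every true pairing passes. The difficulty is that a \emph{false} pairing can also pass: substituting \eqref{eq:psi-expanded} into \eqref{eq:phase-matching-condition} shows this requires an exceptional value of $\phi_k-\phi_\ell$, and, after accounting for the companion false pairing that any genuine ambiguity forces (target $k$ feeding pulse~$2$ and target $\ell$ feeding pulse~$1$), both spurious conditions can hold only when $|f_k-f_\ell|=2f_c^1|\tau_k-\tau_\ell|$. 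The crux is then to show that a second chirp pair with $f_c^4=-f_c^3$ and $f_c^3\neq f_c^1$ destroys every such residual ambiguity: if the same two distinct targets were ambiguous for both pairs we would need $|f_k-f_\ell|=2f_c^1|\tau_k-\tau_\ell|=2f_c^3|\tau_k-\tau_\ell|$, forcing $(f_c^1-f_c^3)|\tau_k-\tau_\ell|=0$ and hence $\tau_k=\tau_\ell$ and $f_k=f_\ell$, i.e.\ $k=\ell$. Lemma~\ref{lemma:ambiguous-phase} is precisely what I would invoke to exclude the remaining coincidences in the phase differences $\phi_k-\phi_\ell$ that could otherwise let a false pairing survive both pulse pairs; choosing $f_c^1,f_c^3$ according to the lemma therefore makes the permutation $\mathbf{P}$ in \eqref{eq:frequency-linear-system} the identity up to relabeling of targets.

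Stage three then follows immediately. Once the frequencies and phases are correctly matched across the positive/negative pair, lines~12--15 invert the linear relations: $f_k$ and $\tau_k$ arise as the half-sum and the scaled half-difference of $\nu_k^1,\nu_k^2$ exactly as in the single-target formulas, $\phi_k$ as the half-sum of the matched phases, and $|c_k|$ from \eqref{eq:least-squares-amplitude}; since every input to these formulas is exact, the outputs $(\hat\tau_k,\hat f_k,\hat c_k)$ equal the true parameters. Composing the three exact stages yields perfect recovery. I expect essentially all of the effort to be concentrated in Stage two, specifically in the careful enumeration of which false pairings satisfy \eqref{eq:phase-matching-condition} and in proving, through Lemma~\ref{lemma:ambiguous-phase}, that no single parameter configuration can remain ambiguous under two distinct chirp rates.
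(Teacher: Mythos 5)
Your proposal is correct and follows essentially the same route as the paper: the paper's own proof is a two-sentence sketch that cites exact noiseless recovery by the Kumaresan--Tufts algorithm and then defers the matching step to the non-aliasing condition on $\nu_k^m$ and the bijectivity of $\theta^m(\tau)$ guaranteed by Lemma~\ref{lemma:ambiguous-phase}, with the detailed ambiguity enumeration (including the $|f_k-f_\ell|=2f_c^1|\tau_k-\tau_\ell|$ condition resolved by the second chirp pair) living in Section~\ref{sec:matching-pairs}. Your write-up simply makes that deferred material explicit within the proof, which is a faithful expansion rather than a different argument.
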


\begin{proof}
Algorithm~\ref{alg:frequency-phase-recovery} offers perfect reconstruction of the frequencies and phases in the absence of any noise in the measurements \cite{tufts-kumaresan-1982}.  Note that the matrix $\mathbf{Y}$, and subsequently $\mathbf{R}$, have rank $K$ in this case.  

The matching procedure in Algorithm \ref{alg:delay-Doppler-recovery} relies on $\nu_k^m$ not being aliased and $\theta^m(\tau)$ being bijective.  The restriction on $f_s$ ensures the former and Lemma~\ref{lemma:ambiguous-phase} ensures the latter.

%

\end{proof}

\subsection{Resource Usage}
We now provide a brief discussion of resource usage to highlight the advantages of our approach.  We concentrate on two main points of resource usage: 1) the rate of samples and number of samples required for recovery and 2) the time-bandwidth product of the waveform used in recovery of the delay and Doppler.  

\subsubsection{Sampling rate and number of samples}
The sampling rate must satisfy the Nyquist condition \eqref{eq:sampling-rate} to avoid anti-aliasing.  The lower bound on $f_s$ depends on $f_{\max}$ and $\tau_{\max}$, and an increase in either requires a corresponding increase in $f_s$.  We first compare this sampling rate to the rate required to sample the unprocessed LFM pulse (i.e., without dechirping).  The bandwidth of an LFM pulse is approximately
$W \approx 1/T_p + f_cT_p$
meaning that processing of this signal directly (e.g., with cross-correlation processing) would require a sampling rate proportional to $W$.  In contrast, our approach requires a sampling rate \eqref{eq:sampling-rate} of
$f_s \geq 2(f_{\max} + 2f_c\tau_{\max}).$
Recall that $T_o = T_p - \tau_{\max}$ meaning $T_p > \tau_{\max}$.  To satisfy the assumption of fixed parameters over the measurement interval, we require that $f_{\max} < \frac{\eta}{T_p}$ as well where $\eta$ is a fixed constant set by the physics of the scenario and generally $\eta<1$.  Our approach therefore requires a lower sampling rate than a direct sampling of the LFM pulse.

In addition to the lower bound on the sampling rate, the frequency estimation step requires a minimum number of samples, captured in Theorem~\ref{thm:perfect-recovery-noiseless}, to recover the $K$ frequencies associated with each target.  In the case of uncorrupted measurements, the KT algorithm requires $N \geq K + 1$ measurements to recover $K$ distinct frequencies.  Recall that the measurements used in the recovery algorithm are taken over the measurement interval $T_o$, so we have a lower bound on the measurement interval
\begin{equation}\label{eq:measurement-time}
  T_o \geq \frac{N}{f_s}.
\end{equation}

Further, the transmitted pulse duration $T_p$ and pulse repetition rate $T$ (of each pulse) are set based on $T_o$.  First, choose a $T_o$ satisfying \eqref{eq:measurement-time}, and then choose
$$T_p = T_o + \tau_{\max} \geq \frac{N}{f_s} + \tau_{\max}$$
and
$$T = T_p + T_g \geq \frac{N}{f_s} + 2\tau_{\max}$$
where $T_g \geq \tau_{\max}$.  The total time processing time, denoted $\mathcal{T}$, needed for $M$ pulses is then $\mathcal{T} = M\cdotp T$.

\subsubsection{Time-bandwidth Product}
Given the previous discussion of sampling rate and number of samples, the total time needed to recover the LTV characterization is $\mathcal{T} = M\cdotp T$.
The bandwidth of an LFM pulse is approximately
$W \approx \frac{1}{T_p} + f_cT_p$s,
so that the time-bandwidth product, $\mathcal{T}\cdotp W$, of a single LFM pulse is approximately
$\mathcal{T}\cdotp W \approx M\cdotp T\cdotp \left(1/T_p + f_cT_p\right),$
which means that $\mathcal{T}\cdotp W$ must scale with $1+K^2$.

\subsubsection{Computational Complexity}
On first inspection, the appearance of the permutation matrix in \eqref{eq:frequency-linear-system} makes it look like the complexity of finding a solution to the matching problem is factorial in $K$.  However, note that many of the \emph{possible solutions} will actually not be feasible given the $\tau_{\max}$ and $f_{\max}$ constraints.  Also, note that once a \emph{true target} is found, then the frequency from each pulse corresponding to that target can be eliminated from the search, effectively fixing that portion of the permutation matrix and shrinking the size of the problem.  The problem solution can then be found by performing comparisons of parameters from pulse to pulse that is polynomial in $K$.

\subsection{Discussion and Comparison}\label{sec:noiseless-discussion}
We now place this work in context of other recent advances in LTV system characterization.  In \cite{bajwa11-radar} and \cite{friedlander-parametric2012}, sequential processing is employed.  In the case of \cite{bajwa11-radar}, the delays are first recovered followed by Dopplers at each recovered delay; in the case of \cite{friedlander-parametric2012}, the Dopplers are first recovered followed by the delays at each recovered Doppler.  One disadvantage to such a sequential approach is that errors in the first stage propagate through to the second stage.  In contrast, our approach in this paper requires only one recovery stage so errors do not propagate.  Both of these techniques also require the transmission of a series of pulses.  In the case of \cite{friedlander-parametric2012}, these pulses are stepped-frequency pulses, quite similar in spirit to an LFM pulse.  Because a sequential technique is employed, an assumption must be made on the maximum number of delays associated with any single Doppler.  The required number of samples for recovery depends on the number of distinct delays associated with each Doppler.  As a worst-case scenario, if there are $K$ targets parameterizing an LTV system, then the minimum number of samples needed to describe the system is on the order of $K^2$.  In contrast, the approach described in this paper requires the minimum number of samples to be on the order of $K$.

The work \cite{eldar-doppler-focusing} does not use a sequential technique, but instead uses a technique termed \emph{Doppler focusing} where multiple pulses are coherently processed to find delays at the Dopplers which are \emph{in focus}, meaning Dopplers in a small interval around a nominal center.  The approach uses low-rate samples, i.e., below the bandwidth of the pulses, where the number of samples depends linearly on the number of targets $K$.  These low-rate samples are used to find the low-pass Fourier series coefficients of the received pulses.  These coefficients are shown to be samples of a sum of sinusoids determined by the delays, Dopplers, and amplitudes of the targets.  Parametric recovery techniques can then be used to recover the delays present in the \emph{focus zone}.  Alternatively, compressed sensing-based techniques can be used if the delays are assumed to lie on a grid.  One disadvantage of this approach is the basis mismatch that occurs if the actual delays do not lie on the assumed grid \cite{chi11}.  Another disadvantage of this approach results from the Doppler focusing.  The width of the \emph{focus zone} is proportional to $\frac{1}{MT}$ where $M$ is the number of pulses and $T$ is the pulse repetition interval.  In the absence of any prior knowledge about Doppler locations, a minimum number of focus zones must be used, and delays calculated from each, to ensure that all Dopplers are covered.  Each focus zone also has a resolution proportional to $\frac{1}{MT}$ meaning that all Dopplers in this interval will be associated with the center Doppler.  Our approach described in this paper avoids this problem by recovering the delays and Dopplers directly.

\section{Identifying LTV Operators from Noise-corrupted Samples}\label{sec:noisy-estimation}
The likely more interesting situation in most applications is the case of noisy samples, i.e., $\tilde{\varepsilon}_m[n] \neq 0$.  In this case, we can use the procedure described above with some modifications to make it more robust to noisy measurements.  To start, we assume the noise samples $\varepsilon_m[n]$ are i.i.d. Gaussian, so the dechirped noise samples $\tilde{\varepsilon}_m[n]$ are as well.  We use Algorithms \ref{alg:frequency-phase-recovery} and \ref{alg:delay-Doppler-recovery} with small modifications described below that are more robust to noisy measurements.  Due to the presence of noise, the estimate of the frequency and phase will contain some error.  However, the frequency estimator produces consistent estimates with variance that approaches the Cramer-Rao lower bound (CRLB).  The variance of the estimate tells us about our ability to resolve closely spaced frequencies, and by extension targets.  If two (or more) estimates are too close to resolve, then the matching procedure cannot reliably distinguish between them.  Where, in the noiseless case, we had an ambiguity equality (for a single pulse), in the noisy case we have an ambiguity interval.  The processing of further LFM pulses shrinks the ambiguity interval to refine the estimates and decrease the estimator variance (increase the estimator resolution), asymptotically achieving zero variance.  Finally, we show that the estimator of the time shifts, frequency shifts, and amplitudes converges asymptotically to the true parameters, regardless of the parameter values.  We further argue that for most scenes of interest (i.e., locations of targets), we can achieve resolution proportional to the noise power from a finite number of samples and pulses.

\subsection{Denoising via Atomic Norm Minimization}\label{sec:denoising}
The KT algorithm described in the previous section, as well as many other spectral estimation algorithms, is sensitive to SNR in the sense that there is typically a threshold SNR, below which, the recovery breaks down.  To improve the performance of the algorithm at lower input SNR, we propose using a denoising step in the recovery.  The procedure we propose uses atomic norm minimization to pre-process the noisy samples to increase the SNR of the samples fed into the spectral estimation stage.  The errors that result from the denoising procedure are not necessarily independent or Gaussian, so we provide justification by examining the empirical first and second order statistics of the denoised samples through Monte Carlo simulations.
We begin with a brief description of the atomic norm and its semidefinite characterization before describing the denoising procedure.

We start by defining the set of atoms
\begin{align*}
  \mathcal{A} = &\{a(t; \phi,\tau,f) : \tau \in [0,\tau_{\max}), f\in (-f_{\max},f_{\max}), \phi \}
\end{align*}
where
$a(t; \phi,\tau,f) = e^{j\phi + j2\pi f t}p(t-\tau)$
with an (almost) arbitrary pulse $p(t)$.  We assume the atoms are contained in $L^2$, and the notation emphasizes that they are functions of $t$ and parameterized by a phase $\phi$, time shift $\tau$, and frequency shift $f$.  These atoms are, not coincidentally, also the building blocks of the LTV model \eqref{eq:LTV-system}.  Note that $c_k = |c_k|e^{j\phi_k}$ so that \eqref{eq:LTV-system} can be written as a superposition of elements from $\mathcal{A}$ scaled by real, positive coefficients $|c_k|$, or we can subsume the phase term into the complex coefficients and set $\phi = 0$
\begin{align*}
  y(t) &= \sum_{k=1}^K |c_k| a(t; \phi_k, \tau_k, f_k) = \sum_{k=1}^K c_k a(t; 0,\tau_k,f_k).
\end{align*}

The atomic norm \cite{recht12-atomic-norm,bhaskar13-atomic-norm,tang13-offgrid} of a signal $y(t)$ (relative to a set of atoms $\mathcal{A}$) is
\begin{align*}
  ||y(t)||_{\mathcal{A}} &= \inf\{\gamma > 0 : y \in \gamma\cdotp\text{conv}(\mathcal{A})\} \\
  &= \inf_{\mathbf{c}_k, \mathbf{\tau}_k, \mathbf{f}_k} \left\{\sum_{k=1}^K |c_k| : y(t) = \sum_{k=1}^K c_k a(t; 0,\tau_k,f_k)  \right\}
\end{align*}
where the variables in the infimum satisfy $|c_k| \geq 0, \tau_k \in [0,\tau_{\max}), f_k\in [-f_{\max},f_{\max}]$ and $\text{conv}(\cdot)$ is the convex hull.  The corresponding dual norm is
$$||z(t)||^*_{\mathcal{A}} = \sup_{a\in\mathcal{A}} \langle z(t),a \rangle = \sup_{\phi, \tau, f} \langle z(t),a(t; \phi,\tau,f)\rangle.$$
Additionally, we define a set of sampled atoms as
$\mathcal{A}_s = \{a_s(n; \phi,\tau,f) : \tau \in [0,\tau_{\max}), f\in (-f_{\max},f_{\max}), \phi \in [0,2\pi) \}$
where
$a_s(n; \phi,\tau,f) = e^{j\phi + j2\pi f nT_s}p(nT_s-\tau)$,
with $n = 0,...,N-1$, are vectors in $\mathbb{C}^N$, and $T_s$ is the sampling interval.  The sampled atomic set is useful because we want to denoise sampled measurements.

\subsubsection{Dechirped and Sampled LFM Pulses}
We are interested in the dechirped and sampled LFM pulses $\tilde{y}_m[n]$.  These signals can be built using the sampled atomic set with
$p(nT_s-\tau) = e^{j2\pi(f_c^m\tau^2 - f_0^m \tau)}e^{-j2\pi (2f_c^m\tau) nT_s}$
so that
$a_s(n; \phi,\tau,f) = e^{j\phi}e^{j2\pi(f_c^m\tau^2 - f_0^m \tau)}e^{j2\pi (f - 2f_c^m\tau) nT_s}$
are sampled sinusoids with frequency dependent on $\tau$ and $f$ and phase dependent on $\phi$ and $\tau$.  As before, we can change the parameters to $\nu$ and $\psi$ so the atoms are
$a_s(n; \psi, \nu) = e^{j\psi}e^{j2\pi \nu nT_s}.$
The measurements are
\begin{equation}\label{eq:corrupted-measurements}
  \tilde{y}_m[n] = \sum_{k=1}^K |c_k|a_s(n; \psi_k, \nu_k) + \tilde{\varepsilon}_m[n]
\end{equation}
where $\tilde{\varepsilon}_m[n]$ is independent, white Gaussian noise.

\subsubsection{Denoising}
We are now in a position to leverage a result from Bhaskar et al. \cite{bhaskar13-atomic-norm} to denoise measurements of the form \eqref{eq:corrupted-measurements}.  Consider $\tilde{\varepsilon}_m[n]$ independent Gaussian with variance $\sigma^2$ and samples $\tilde{y}_m[n]$, $n=0,...,N-1$.  We solve the following \emph{atomic soft thresholding} (AST) problem.

\begin{equation}\label{eq:optimization-atomic-norm}
  \hat{y} = \arg\min_{z} \frac{1}{2}||z[n] - \tilde{y}_m[n]||_2^2 + \eta||z[n]||_{\mathcal{A}_s}
\end{equation}
where $\eta$ is a regularization parameter controlling the relative impact of the mean-squared error and the atomic norm.

The following theorem captures the denoising performance for the estimate $\hat{y}$ of the AST problem for the case of Gaussian noise \cite[Theorem 2]{bhaskar13-atomic-norm}.

\begin{theorem}[AST with Gaussian noise]
\label{thm:AST-asymptotic-error}
Let $y^{\star}[n] = \sum_{k=1}^K g_k e^{j2\pi n \omega_k}$ for complex numbers $g_1,..,g_K$ and unknown normalized frequencies $\omega_1,...,\omega_K \in [0,1]$.  Consider measurements given by $y[n] = y^{\star}[n] + w[n]$ where $w[n] \sim \mathcal{N}(0,\sigma^2 I_N)$ i.i.d. The estimate $\hat{y}[n]$ obtained by solving \eqref{eq:optimization-atomic-norm} with $\eta = \sigma\sqrt{N\log N}$ has asymptotic mean-squared-error
\begin{equation}\label{eq:asymptotic-error}
  \frac{1}{N}\mathbb{E}||\hat{y} - y^{\star}||_2^2 < \sigma \sqrt{\frac{\log N}{N}}\sum_{k = 1}^K|g_k|
\end{equation}
for $N$ sufficiently large (see \cite{bhaskar13-atomic-norm} for details).
\end{theorem}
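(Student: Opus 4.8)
The plan is to treat the AST objective in \eqref{eq:optimization-atomic-norm} as a standard convex-regularization denoising problem and to run the canonical ``basic inequality plus dual-norm control'' argument, specialized to the atomic set of complex sinusoids. Throughout I write $h = \hat{y} - y^{\star}$ for the estimation error and work in the Theorem notation (data $y$, clean signal $y^{\star}$, noise $w$). The key observation is that for the sinusoidal atoms $a(\omega) = (e^{j2\pi n\omega})_{n=0}^{N-1}$ the dual atomic norm is
$$||w||_{\mathcal{A}}^{*} = \sup_{\omega\in[0,1)}|\langle w, a(\omega)\rangle| = \sup_{\omega\in[0,1)}\left|\sum_{n=0}^{N-1} w[n]\,e^{-j2\pi n\omega}\right|,$$
i.e.\ the supremum over frequency of the modulus of the (conjugated) discrete-time Fourier transform of the noise. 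The whole proof hinges on showing that the regularization level $\eta = \sigma\sqrt{N\log N}$ dominates $||w||_{\mathcal{A}}^{*}$ with high probability; everything else is deterministic convex analysis.

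First the deterministic step. Since $\hat{y}$ minimizes the strictly convex objective and $y^{\star}$ is feasible, comparing the two objective values gives the basic inequality $\frac{1}{2}||\hat{y}-y||_2^2 + \eta||\hat{y}||_{\mathcal{A}} \le \frac{1}{2}||y^{\star}-y||_2^2 + \eta||y^{\star}||_{\mathcal{A}}$. Substituting $y = y^{\star} + w$, expanding the squares, and cancelling $||w||_2^2$ yields
$$\frac{1}{2}||h||_2^2 + \eta||\hat{y}||_{\mathcal{A}} \le \mathrm{Re}\,\langle h, w\rangle + \eta||y^{\star}||_{\mathcal{A}}.$$
I would then split $\mathrm{Re}\,\langle h, w\rangle = \mathrm{Re}\,\langle \hat{y}, w\rangle - \mathrm{Re}\,\langle y^{\star}, w\rangle$ and bound $\mathrm{Re}\,\langle \hat{y}, w\rangle \le ||\hat{y}||_{\mathcal{A}}\,||w||_{\mathcal{A}}^{*}$ by the generalized H\"older inequality, so that on the event $\mathcal{E} = \{||w||_{\mathcal{A}}^{*} \le \eta\}$ the term $\eta||\hat{y}||_{\mathcal{A}}$ cancels and the error collapses to
$$\frac{1}{2}||h||_2^2 \le |\langle y^{\star}, w\rangle| + \eta||y^{\star}||_{\mathcal{A}} \le \big(||w||_{\mathcal{A}}^{*} + \eta\big)||y^{\star}||_{\mathcal{A}} \le 2\eta\,||y^{\star}||_{\mathcal{A}}.$$
Since $||y^{\star}||_{\mathcal{A}} \le \sum_k |g_k|$ directly from the definition of the atomic norm, this already delivers the advertised $\sqrt{(\log N)/N}$ scaling on $\mathcal{E}$; matching the sharp leading constant in \eqref{eq:asymptotic-error} requires the more careful asymptotic accounting of \cite{bhaskar13-atomic-norm}, but the rate is exactly what this chain produces.

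The main obstacle is the probabilistic step: controlling $||w||_{\mathcal{A}}^{*} = \sup_\omega |W(\omega)|$, where $W(\omega) = \sum_n w[n]\,e^{-j2\pi n\omega}$ is a complex Gaussian process indexed by $\omega \in [0,1)$. For each fixed $\omega$, $W(\omega)$ is mean-zero complex Gaussian with $\mathbb{E}|W(\omega)|^2 = N\sigma^2$, so a single coordinate is of size $\sigma\sqrt{N}$; the extra $\sqrt{\log N}$ in $\eta$ is precisely the premium for taking the supremum over the effectively $N$-dimensional family of frequencies. I would establish $\mathbb{E}\big[\sup_\omega |W(\omega)|\big]$ of order $\sigma\sqrt{N\log N}$ either by Dudley's chaining inequality applied to the sub-Gaussian increments $W(\omega)-W(\omega')$, or by a covering-net argument: place an $O(N)$-point net on $[0,1)$, take a union bound over the net using the Gaussian tail of each $|W(\omega_j)|$, and pass to the continuum via a Lipschitz/Bernstein bound on the derivative $W'(\omega)$ (whose coefficients carry an extra factor of $n \le N$). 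Gaussian concentration (Borell--Tsirelson--Ibragimov--Sudakov) then converts the bound on the mean into an exponential tail, giving $\Pr[\mathcal{E}^c] = \Pr[\,||w||_{\mathcal{A}}^{*} > \eta\,] \to 0$ quickly. Finally, to turn the high-probability bound into the mean-squared error \eqref{eq:asymptotic-error}, I would split $\mathbb{E}||h||_2^2 = \mathbb{E}[\,||h||_2^2\,\mathbf{1}_{\mathcal{E}}\,] + \mathbb{E}[\,||h||_2^2\,\mathbf{1}_{\mathcal{E}^c}\,]$; the first term is handled by the deterministic bound above, while on $\mathcal{E}^c$ one crudely bounds $||h||_2$ by a polynomial in $N$ and $||y||_2$ (the minimizer cannot be far from the data), so its product with the exponentially small $\Pr[\mathcal{E}^c]$ is negligible for $N$ large. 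Dividing by $N$ then yields \eqref{eq:asymptotic-error}, with $\sum_k |g_k|$ entering through $||y^{\star}||_{\mathcal{A}}$.
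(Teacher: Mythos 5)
The paper does not actually prove this theorem: it is imported verbatim as Theorem~2 of \cite{bhaskar13-atomic-norm}, so there is no in-paper argument to compare against beyond the citation. Your sketch is a faithful reconstruction of the proof strategy in that reference: a deterministic step combining the optimality (or basic) inequality with the generalized H\"older bound $\langle z, w\rangle \le \|z\|_{\mathcal{A}}\|w\|_{\mathcal{A}}^{*}$ and the fact that $\|y^{\star}\|_{\mathcal{A}}\le\sum_k|g_k|$, followed by a probabilistic step showing $\sup_{\omega}|\sum_n w[n]e^{-j2\pi n\omega}|\lesssim\sigma\sqrt{N\log N}$ via a covering/chaining argument with a Bernstein-type bound on the derivative of the Gaussian trigonometric polynomial --- this is exactly how Bhaskar et al.\ proceed. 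Three small caveats: (i) your chain yields $\tfrac{1}{N}\|h\|_2^2\le 2\eta\|y^{\star}\|_{\mathcal{A}}/N=2\sigma\sqrt{(\log N)/N}\sum_k|g_k|$, a factor of $2$ looser than \eqref{eq:asymptotic-error}, which you correctly flag as requiring the reference's sharper accounting (their choice of $\eta$ also carries lower-order $\log\log$ terms suppressed in the statement here); (ii) the noise in this application is circularly symmetric \emph{complex} Gaussian, so the dual-norm supremum is over the modulus of a complex Gaussian polynomial and the variance bookkeeping differs by the usual factor of $2$ from the real case; (iii) for the passage from a high-probability bound to the expectation in \eqref{eq:asymptotic-error}, your crude bound on the bad event is justified since comparing the AST objective at $z=0$ gives $\|\hat{y}-y\|_2\le\|y\|_2$, so $\|h\|_2$ has polynomially bounded moments. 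None of these affects the advertised $\sqrt{(\log N)/N}$ rate, and the argument is sound as a proof of the cited result.
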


We apply Theorem \ref{thm:AST-asymptotic-error} to our corrupted samples $\tilde{y}_m[n]$ by equating the frequencies
$\omega_k = \nu_kT_s$
and the complex coefficients with the amplitude and phase
$g_k = |c_k|e^{j\psi_k}$
and by letting $\tilde{\varepsilon}_m[n] \sim \mathcal{N}(0,\sigma^2 I_N)$ i.i.d.  If we write the uncorrupted measurements as
$$y_m^{\star}[n] = \sum_{k=1}^K |c_k|a_s(n; \psi_k, \nu_k)$$
and the denoised measurements as $\hat{y}_m[n]$, then the error $e_m[n] = \hat{y}_m[n] - y_m^{\star}[n]$ satisfies
\begin{equation}\label{eq:error-variance}
  \frac{1}{N}\mathbb{E}||e_m[n]||_2^2 < \sigma \sqrt{\frac{\log N}{N}}\sum_{k = 1}^K|c_k|,
\end{equation}
for sufficiently large $N$, and we write the denoised measurements as
$\hat{y}_m[n] = y_m^{\star}[n] + e_m[n].$
Note that $e_m[n]$ is not guaranteed to be independent Gaussian, but empirically is zero mean with variance \eqref{eq:error-variance} and tends to follow Gaussian statistics.

The optimization problem \eqref{eq:optimization-atomic-norm} relies on calculating the atomic norm of $z[n]$.  This can be accomplished via a semidefinite programming (SDP) problem \cite{bhaskar13-atomic-norm} so that the solution of \eqref{eq:optimization-atomic-norm} is equivalent to the solution of
\begin{align*}
  \min_{t,u,x} &\frac{1}{2}||x-\tilde{y}[n]||_2^2 + \frac{\gamma}{2}(t+u_1) \ \
  \text{subject to} \ \ \begin{bmatrix} T(u) & x \\ x^* & t \end{bmatrix} \succeq 0
\end{align*}
where the function $T(u)$ forms a Toeplitz Hermitian matrix from the entries in $u$ (i.e., $u_1$ on the diagonal, $u_N$ in the upper right corner, etc.).  The solution $\hat{y}$, which satisfies \eqref{eq:asymptotic-error}, is the argument of the vector $x$ from the optimization.
The denoised measurements are then provided to modified versions of Algorithms 1 and 2.

\subsection{Frequency and Phase Recovery}
Taking the denoised measurements as input, we make two standard modifications to Algorithm \ref{alg:frequency-phase-recovery} that better handle noise in the measurements.  The first change is replacement of the correlation matrix $\mathbf{R}$ with the \emph{proxy} correlation matrix $\mathbf{\tilde{R}}$.  The proxy correlation matrix is computed by first finding the singular value decomposition (SVD) of the correlation matrix $\mathbf{R} = \mathbf{USV}^H$.  The proxy is then formed from
$\mathbf{\tilde{R}} = \mathbf{\tilde{U}\tilde{S}\tilde{V}}^H$
where $\mathbf{\tilde{S}}$ contains only the $K$ largest singular values and $\mathbf{\tilde{U}}$ and $\mathbf{\tilde{V}}$ contain the corresponding $K$ singular vectors.  In the absence of noise, $\mathbf{R}$ is of rank $K$ and therefore only contains $K$ non-zero singular values.  In the presence of noisy measurements, $\mathbf{R}$ is full rank but the singular values cluster into two groups: the largest $K$ contain most of the information about the constituent frequencies while the remaining are close to zero and contain information about the noise.  Using only the $K$ largest reduces the effect of the noise \cite{tufts-kumaresan-1982}.

The second change occurs when finding the $K$ roots of the prediction filter.  In the absence of noise, $K$ of the roots will be on the unit circle while the rest reside inside the unit circle.  When noise is present, the $K$ roots we are after will likely not lie exactly on the unit circle.  We therefore search for the $K$ roots that are closest to the unit circle \cite{tufts-kumaresan-1982}.  Note that if $K$ is unknown \emph{a priori}, which is often the case in practice, $K$ can be estimated from the clustering of the larger singular values of $\mathbf{R}$ discussed above.  

\begin{algorithm}
  \caption{Algorithm for recovering the frequency and phase of sinusoids from corrupted measurements}
  \label{alg:noisy-frequency-phase-recovery}
  \begin{algorithmic}[1]
    \STATE Data: $\tilde{y}_m[n]$
    \STATE Calculate proxy correlation matrix $\mathbf{\tilde{R}}$ from the $K$ largest singular values of $\mathbf{Y}^H\mathbf{Y}$.
    \STATE Calculate the coefficients $\mathbf{h} = -\mathbf{\tilde{R}}^{-1}\mathbf{Y}^H\mathbf{y}$.
    \STATE Find the $K$ roots $\hat{z}_k$ of $H(z)$ closest to the unit circle.
    \STATE Calculate $\hat{\nu}_k = \frac{T_s}{2\pi}\text{phase}(\hat{z}_k)$.
    \STATE Calculate $\hat{\psi}_k$ and $|\hat{c}_k|$ from the least-squares solution $\hat{\beta}_k$.
  
  \end{algorithmic}
\end{algorithm}

The variance of the frequency estimate approaches the asymptotic Cramer-Rao bound, summarized in the following lemma (Theorem 4.1 of \cite{stoica-89}).

\begin{lemma}\label{lemma:KT-CRB}
Given $N$ uniform samples of a signal consisting of a superposition of sinusoids, e.g., given by the model \eqref{eq:measurement-samples}, corrupted by independent circularly symmetric complex Gaussian noise with variance $\sigma^2$, the KT algorithm (Algorithm \ref{alg:noisy-frequency-phase-recovery}) produces a consistent estimate of the frequencies $\hat{\nu}_k$, $k=1,\ldots,K$, that is asymptotically efficient, that is, the variance of the estimate approaches, as $N\to\infty$, the asymptotic Cramer-Rao bound
\begin{equation}\label{eq:CRD-freq-asympotic}
  \lim_{N\to\infty}\text{var}(\hat{\nu}_k)\cdot N^3 = \frac{6}{\text{SNR}_k T_s^2}
\end{equation}
where $\text{SNR}_k = \frac{|c_k|^2}{\sigma^2}$ is the signal to noise ratio for the $k^{th}$ target.
Additionally, the estimate of the amplitude $\hat{c}_k$ is consistent and asymptotically efficient in the sense that it also approaches the asymptotic Cramer-Rao bound
$$\lim_{N\to\infty}\text{var}(\hat{c}_k)\cdot N = \sigma^2.$$

\end{lemma}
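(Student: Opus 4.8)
The plan is to treat the lemma as a specialization of the asymptotic analysis of the forward--backward linear-prediction (Kumaresan--Tufts) estimator carried out in \cite{stoica-89}, so that the bulk of the written argument reduces to three steps: (i) verifying that the measurement model matches the hypotheses of that analysis, (ii) invoking its consistency/efficiency conclusion, and (iii) evaluating the Cram\'er--Rao bound explicitly for the present signal model to pin down the constants in \eqref{eq:CRD-freq-asympotic} and in the amplitude claim. First I would confirm that the samples in \eqref{eq:measurement-samples} are exactly of the form assumed in \cite{stoica-89}: a superposition of $K$ undamped complex exponentials with distinct frequencies $\nu_k$, sampled uniformly at spacing $T_s$, and corrupted by additive i.i.d.\ circularly symmetric complex Gaussian noise of variance $\sigma^2$. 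Distinctness of the $\nu_k$ is guaranteed by the Nyquist condition \eqref{eq:sampling-rate} together with target separation; this is precisely what forces the $K$ signal roots to separate asymptotically from the noise roots and justifies the rank-$K$ truncation that defines the proxy matrix $\tilde{\mathbf R}$ in Algorithm~\ref{alg:noisy-frequency-phase-recovery}. Granting these hypotheses, Theorem~4.1 of \cite{stoica-89} yields directly that each $\hat\nu_k$ is consistent and asymptotically efficient, i.e.\ its variance approaches the CRB as $N\to\infty$.

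Next I would compute the bound itself. For a single target the unknowns entering one tone are the real amplitude $|c_k|$, the phase $\psi_k$, and the frequency $\nu_k$, so I would form the $3\times 3$ Fisher information matrix from the derivatives of the noiseless sample $s_k[n]=|c_k|e^{j2\pi\psi_k}e^{j2\pi\nu_k nT_s}$ via the standard complex-Gaussian formula $I_{\theta\theta'}=(2/\sigma^2)\,\mathrm{Re}\sum_n(\partial_\theta s_k[n])(\partial_{\theta'}s_k[n])^*$. The frequency derivative contributes a factor $nT_s$, so the relevant sums are $\sum_n 1$, $\sum_n n$, and $\sum_n n^2$; inverting the matrix and isolating the frequency entry gives a variance of order $1/(\mathrm{SNR}_k N^3)$. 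The coupling with the unknown phase is what produces the factor $6$, while the frequency-derivative factor $nT_s$ together with the frequency normalization accounts for the $T_s$-dependence in \eqref{eq:CRD-freq-asympotic}. To pass from the single tone to the $K$-tone case I would argue that distinct exponentials become asymptotically orthogonal over $N$ samples, so that the cross-target blocks of the Fisher matrix are of strictly lower order in $N$ than the diagonal blocks; hence the inverse block-diagonalizes in the limit and the per-target bound \eqref{eq:CRD-freq-asympotic} decouples. The amplitude statement is then the linear-Gaussian special case: once $\hat\nu_k\to\nu_k$ consistently, the least-squares estimate \eqref{eq:least-squares-amplitude} is asymptotically the maximum-likelihood estimate evaluated at the true frequencies, whose covariance is $\sigma^2(\mathbf F_{\hat\nu}^H\mathbf F_{\hat\nu})^{-1}\to(\sigma^2/N)\mathbf I$ by the same orthogonality, giving $\lim_N\mathrm{var}(\hat c_k)\cdot N=\sigma^2$.

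The main obstacle is not the CRB evaluation, which is routine multivariate calculus, but the claim of \emph{efficiency} of the KT estimator --- that the subspace-truncated linear-prediction roots actually attain this bound rather than merely scaling like $N^{-3}$. That fact rests on the first-order perturbation analysis of the signal-subspace eigenvectors, equivalently of the roots of the prediction polynomial, under small noise: one must track how the $O(\sigma)$ perturbation of $\tilde{\mathbf R}$ propagates to the near-unit-circle roots and show the resulting frequency fluctuations have covariance matching the bound above. Since the lemma is stated as a direct consequence of \cite[Theorem~4.1]{stoica-89}, I would cite that perturbation result rather than reproduce it, and devote the written proof to the model-matching and explicit-CRB steps, noting only that the rank-$K$ proxy and closest-root selection in Algorithm~\ref{alg:noisy-frequency-phase-recovery} are exactly the estimator variant to which that theorem applies.
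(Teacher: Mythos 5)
Your proposal matches the paper's treatment: the lemma is stated there as a restatement of Theorem~4.1 of \cite{stoica-89} with no independent proof, and your argument likewise defers the substantive claim (asymptotic efficiency of the forward--backward linear-prediction roots) to that cited perturbation analysis, adding only routine model-matching and the explicit single-tone CRB evaluation. This is correct and essentially the same approach, just written out in more detail than the paper bothers to give.
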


Lemma \ref{lemma:KT-CRB} tells us that the estimates of the frequencies and amplitudes converge to the true values ($\hat{\nu}_k \to \nu_k$ and $\hat{c}_k \to c_k$ as $N\to\infty$) and that the rate of convergence is close to the Cramer-Rao bound, at least to first order.  The variance of the frequency estimate decays, to first order, as
\begin{equation}\label{eq:CRB-freq}
  \text{var}(\hat{\nu}_k) \approx \frac{6}{N^3 \text{SNR}_k T_s^2},
\end{equation}
and the variance of the amplitude decays as
$$\text{var}(\hat{c}_k) \approx \frac{\sigma^2}{N}.$$
We use these convergence rates as a rough measure of the resolution offered by the KT algorithm.


\subsection{Recovery of Time Shifts and Frequency Shifts}
Note that the uncertainty introduced into the estimate of the phases $\psi_k^m = \phi_k + f_c^m\tau_k^2$ causes trouble when trying to recover $\tau_k$.  Because the relationship is quadratic, error introduced into $\hat{\psi}_k^m$ affects recovery of smaller values of $\tau_k$ more than larger ones.  In fact, as $\tau_k \to 0$, the SNR $\to 0$ as well.  We therefore do not use the phase information for recovery from noisy measurements.

Uncertainty in the estimated frequencies affects the relation \eqref{eq:frequency-linear-system} by introducing an \emph{error} term $\gamma \in \mathbb{C}^{M\cdotp K}$ into the recovered frequencies $\mathbf{\hat{\nu}} = \mathbf{\nu} + \mathbf{\gamma}$.  The vector $\mathbf{\nu}$ contains the true frequencies, and because the estimator in Algorithm~\ref{alg:noisy-frequency-phase-recovery} is consistent (see Lemma~\ref{lemma:KT-CRB}), we consider $\mathbf{\gamma}$ to be a zero mean vector with independent entries and variances \eqref{eq:CRB-freq}.
The following theorem establishes that if $\mathbf{\gamma}$ contains errors of finite variance and the chirp rates of the transmitted LFM pulses are different, then solving a least-squares problem produces an exact solution $\mathbf{\hat{\beta}}$ asymptotically.

We first remind the reader of the problem formulation.  With noisy measurements, the accuracy of the estimate depends on the number of LFM pulses used, $M$, because the number of measurements increases with $M$.  We make this explicit by writing all vectors and matrices that have size dependent on $M$ as a function of $M$, e.g., $\mathbf{\hat{\nu}}(M)$ and $\mathbf{A}(M)$.  The vector of estimated frequencies is $\mathbf{\hat{\nu}}(M) = [\mathbf{\hat{\nu}}^1 \cdots \mathbf{\hat{\nu}}^M]^T$ and $\mathbf{\hat{\nu}}(M) = \mathbf{\nu}(M) + \mathbf{\gamma}(M)$ where $\mathbf{\nu}(M)$ contains the true frequencies and $\mathbf{\gamma}(M)$ contains the errors.  The matrix
$$\mathbf{A}(M) = \begin{bmatrix} 1 & -2f_c^1 \\ \vdots & \vdots \\ 1 & -2f_c^M \end{bmatrix} \otimes \mathbf{I}_K$$
contains the constraints relating the time shifts and frequency shifts, and $\mathbf{P}(M)$ is a block diagonal permutation matrix.  The vector $\mathbf{\beta}^*$ contains the true time shifts and frequency shifts of the targets (whose size is not a function of $M$).


\begin{theorem}(Asymptotically Perfect Recovery with Noise)\label{thm:asymptotic-noisy-recovery}
  Fix $\tau_{\max}$ and $f_{\max}$.  Choose $\{f_c^{a}\}$ for $a = 0,1,\ldots, M$ such that
  \begin{enumerate}
    \item $f_c^{a} \neq f_c^{b}$ for $a\neq b$, and
    \item $f_c^{a} = -f_c^{a-1}$ for $a$ odd.
  \end{enumerate}
  If $\mathbf{\gamma}(M)$ is a vector of zero mean independent random variables with finite variance, the solution $\mathbf{\hat{\beta}}(M)$ to
\begin{equation}\label{eq:frequency-linear-system-noisy}
  \mathbf{\hat{\beta}}(M) = \arg\min_{\mathbf{\beta}, \mathbf{P}(M)} ||\mathbf{A}(M)\beta - \mathbf{P}(M)\mathbf{\hat{\nu}}(M)||^2_2
\end{equation}
converges in probability to the true parameters as $M\to\infty$:
\begin{equation*}
  \mathbf{\hat{\beta}}(M) \overset{p}{\to} \mathbf{\beta}^*.
\end{equation*}

\end{theorem}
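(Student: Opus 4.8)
The plan is to exploit the Kronecker structure $\mathbf{A}(M)=\mathbf{B}\otimes\mathbf{I}_K$ to turn the joint optimization \eqref{eq:frequency-linear-system-noisy} into $K$ coupled linear regressions, and then to prove consistency in two stages: first with the matching held at its correct value, and then by ruling out that any incorrect matching produces a smaller residual as $M\to\infty$. Since $(\mathbf{A}(M)\mathbf{\beta})$ has $(m,k)$ entry equal to $f_k-2f_c^m\tau_k$, for a fixed permutation $\mathbf{P}(M)$ the inner minimization over $\mathbf{\beta}$ decouples across targets: writing $g_{m,k}$ for the recovered frequency that $\mathbf{P}(M)$ assigns to slot $k$ in pulse $m$, the objective equals $\sum_{k=1}^K\sum_{m=1}^M |f_k-2f_c^m\tau_k-g_{m,k}|^2$, a sum of $K$ ordinary least-squares fits of the response $g_{\cdot,k}$ against the regressors $(1,-2f_c^m)$.

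First I would treat the correct matching. Here slot $k$ receives $g_{m,k}=\hat\nu_k^m=(f_k-2f_c^m\tau_k)+\gamma_k^m$, so each regression is correctly specified with zero-mean finite-variance errors (Lemma~\ref{lemma:KT-CRB} supplies the finite per-pulse variance). The Gram matrix for slot $k$ is $\sum_{m}(1,-2f_c^m)^{T}(1,-2f_c^m)$, whose smallest eigenvalue grows proportionally to $M$ times the empirical variance of $\{f_c^m\}$; Conditions~1 and 2 (distinct chirp rates arranged in $\pm$ pairs) keep this variance bounded away from zero, so the Gram matrix diverges. Standard least-squares consistency then gives $(\hat f_k,\hat\tau_k)\overset{p}{\to}(f_k,\tau_k)$, which would finish the theorem if the matching were known.

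The hard part is that $\mathbf{P}(M)$ is optimized jointly, so a spurious matching might produce an even smaller residual. I would handle this as an M-estimation problem over the compact set $f_k\in[-f_{\max},f_{\max}],\ \tau_k\in[0,\tau_{\max}]$. For a candidate $\mathbf{\beta}$ define the optimally-matched per-pulse cost $C(\mathbf{\beta},f_c^m)=\min_{\pi}\sum_k|f_k-2f_c^m\tau_k-\hat\nu^m_{\pi(k)}|^2$, so that $\hat{\mathbf{\beta}}(M)$ minimizes $\tfrac1M\sum_m C(\mathbf{\beta},f_c^m)$. The crucial identifiability fact is that the $K$ candidate ``lines'' $f_c\mapsto f_k-2f_c\tau_k$ can reproduce the true value-set $\{f^*_j-2f_c\tau^*_j\}_j$ at infinitely many distinct chirp rates only if they coincide with the true lines: a straight line that meets the finite union of true lines at infinitely many $f_c$ must, by pigeonhole, agree with a single true line at infinitely many points and hence equal it. Distinctness (Condition~1) together with $M\to\infty$ furnishes this set of distinct rates --- the same mechanism that removes the finite-$M$ ambiguities in Theorem~\ref{thm:perfect-recovery-noiseless} --- so the averaged (population) objective is uniquely minimized, and well-separated, at $\mathbf{\beta}^{*}$, where it equals only the noise floor.

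Finally I would close the argument by concentration: the per-pulse noises $\mathbf{\gamma}^m$ are independent with finite variance, and the costs $C(\cdot,f_c^m)$ are bounded and Lipschitz in $\mathbf{\beta}$ on the compact parameter set, yielding a uniform law of large numbers for $\tfrac1M\sum_m C(\mathbf{\beta},f_c^m)$. The standard argmin-consistency theorem (uniform convergence plus a well-separated unique minimizer on a compact set) then delivers $\hat{\mathbf{\beta}}(M)\overset{p}{\to}\mathbf{\beta}^{*}$. The main obstacle, and the step I would spend the most care on, is exactly this uniform control over both $\mathbf{\beta}$ and the combinatorially many matchings; the $\geq 3$-distinct-rate identifiability is what prevents ghost targets and is the crux that makes the spurious permutations asymptotically uncompetitive.
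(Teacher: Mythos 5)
Your proposal is correct and its skeleton matches the paper's: both split the error into a noise-driven part and a matching (permutation) part, both exploit $\mathbf{A}(M)=\mathbf{B}\otimes\mathbf{I}_K$ so that the $\pm$-paired chirp rates make $\mathbf{B}^*\mathbf{B}$ diagonal with entries growing like $M$, and both dispatch the noise part by averaging the independent, finite-variance errors $\gamma_k^m$ over pulses (the paper does this by writing out the pseudoinverse explicitly and invoking the weak law of large numbers; your ``Gram matrix diverges, hence ordinary least-squares consistency'' is the same computation in standard-regression clothing). Where you genuinely diverge is the treatment of $\mathbf{P}(M)$: the paper disposes of the permutation ambiguity in three informal sentences, asserting that chirp-rate diversity makes the true solutions ``cluster'' while spurious ones ``spread,'' with no identifiability lemma and no uniform control over the combinatorially many matchings. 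You instead cast the joint problem as M-estimation over the compact parameter set, prove identifiability via the observation that a candidate line $f_c\mapsto f_k-2f_c\tau_k$ can reproduce the true value-set at more than finitely many distinct chirp rates only by coinciding with a true line (pigeonhole plus the fact that two distinct lines meet at most once), and then close with a uniform law of large numbers and argmin consistency. This is a more defensible route for the part of the theorem that actually needs work, and it is the part the paper leaves weakest; the price is that you must verify the uniform LLN carefully (your claim that the per-pulse costs are \emph{bounded} needs the noise to be controlled, e.g.\ via uniform integrability rather than boundedness). One caveat you share with the paper: conditions (1) and (2) alone do not force the empirical variance of $\{f_c^m\}$ (equivalently $\min_m|f_c^m|$ in the paper's bound $g_{22}\geq 4M\min_m(f_c^m)^2$) to stay bounded away from zero, since distinct rates could accumulate at the origin; both arguments implicitly assume the rates are chosen bounded away from zero, so this is not a gap relative to the paper's own proof.
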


The proof relies on the weak law of large numbers and is provided in Appendix~\ref{app:proof-asymp-perfect-recovery-noise}.  Note that requirement (2) on $\{f_c^a\}$ can be relaxed at the expense of slower convergence of $\mathbf{\beta}(M)$.

\begin{algorithm}
  \caption{Recovering the time and frequency shifts from the noisy frequency and phase estimates.}
  \label{alg:noisy-delay-Doppler-recovery}
  \begin{algorithmic}[1]
    \STATE Data: $\hat{\nu}_k^m$ for  $k = 1,\ldots,K$ and $m=1,\ldots,M$
    \STATE Find possible target time and frequency shifts by solving
    $$\min_{\beta, \mathbf{P}} ||\mathbf{A}^m\mathbf{\beta} - \mathbf{P}^m\mathbf{\hat{\nu}}^m||^2_2$$
    for $m=1,\ldots,M$
    \FOR{$k=1,...,K$}
      \FOR{$\ell=1,...,K$}
        \STATE Calculate the slope $f_c(k,\ell)$ between possible target $k$ and possible target $\ell$ (as calculated in step 2)
      \ENDFOR
    \ENDFOR
    \STATE Find the slope that maximizes the distance from each of the slopes $f_c(k,\ell)$.  Send the $(M+1)^{th}$ pulse with this slope.
    \STATE The estimated time and frequency shifts are in $\mathbf{\beta}$
    \STATE Estimate the target amplitudes $\hat{c}_k$ using the least-squares estimate from Algorithm~\ref{alg:noisy-frequency-phase-recovery}
  \end{algorithmic}
\end{algorithm}


%

\section{Numerical Experiments}\label{sec:numerical-experiements}
We examine how well this procedure works using numerical experiments.  The KT algorithm provides perfect recovery of frequencies in the absence of noise, so we first generate a scene with targets at various time and frequency shifts.  We ensure that some of these targets are very close to each other in time and frequency.  The algorithm recovers all the targets to machine precision when two pulses of different chirp rate are used to generate the measurements.  We then explore the recovery of targets from noise-corrupted measurements by first examining the accuracy of the recovery procedure through Monte Carlo trials at a range of SNR environments.  The accuracy experiences a threshold SNR at which the error blows up.  We then pre-process the measurements with the denoising procedure to show that at the same input SNR range, the recovery has not yet reached the threshold.

\subsection{Noise-free Recovery}
We first ensure that we can perfectly recover the time shifts, frequency shifts, and amplitudes from uncorrupted samples.  The results are shown in Fig. \ref{fg:target-recovery-noiseless}.  The red \textrm{x} shows the true target location, while the blue circle shows the recovered parameters.  The recovery is to machine precision, meaning the errors were no larger than $10^{-10}$.

\begin{figure}[tp]
    \centering
    \includegraphics[width=0.99\columnwidth]{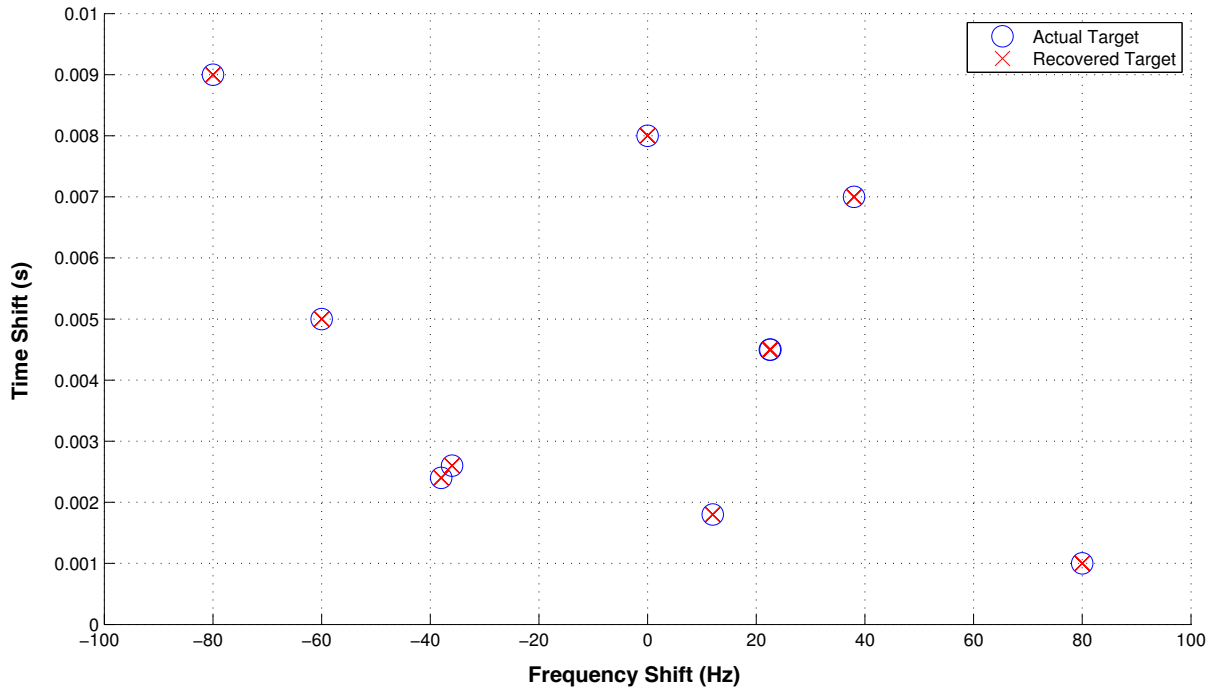}
  \caption{Recovery from noise-free measurements is perfect to machine precision.}
  \label{fg:target-recovery-noiseless}
\end{figure}

\subsection{Estimation of the Time and Frequency Shift from Noisy Samples}
The accuracy of the recovery procedure from noisy measurements, at various SNR environments, is shown in Fig. \ref{fg:recovery-error-noisy} for each parameter (time shift, frequency shift, and amplitude).  The RMSE at each SNR value is averaged over 1000 Monte Carlo trials in which a random realization of noise has been added to measurements containing $k=3$ targets with randomly generated parameters.  The time shift parameter was chosen uniformly from the interval $[0,\tau_{\max})$ and the frequency shift parameter uniformly from the interval $(-f_{\max}, f_{\max})$.  The error is proportional to the noise level up to about 10 dB, below which the errors start to grow much faster.  

To show the improvement that can be gleaned from denoising, Fig.~\ref{fg:recovery-error-noisy-denoised} shows the the results of running the same Monte Carlo trials described above, with the addition of the atomic norm denoising procedure described in Section~\ref{sec:denoising}.  The denoising is performed on the measurements after the random noise of the indicated \emph{input SNR} has been added.  This means that for a given SNR value in both Fig.~\ref{fg:recovery-error-noisy} and Fig.~\ref{fg:recovery-error-noisy-denoised}, AWGN of the same variance is added to the measurements.  The results in Fig.~\ref{fg:recovery-error-noisy-denoised} show that the threshold point is not reached even at $0$~dB SNR while it is reached at approximately $10$~dB SNR in Fig.~\ref{fg:recovery-error-noisy}.

\begin{figure}[tp]
  \centering
  \begin{subfigure}{0.33\columnwidth}
    \includegraphics[width=0.99\columnwidth]{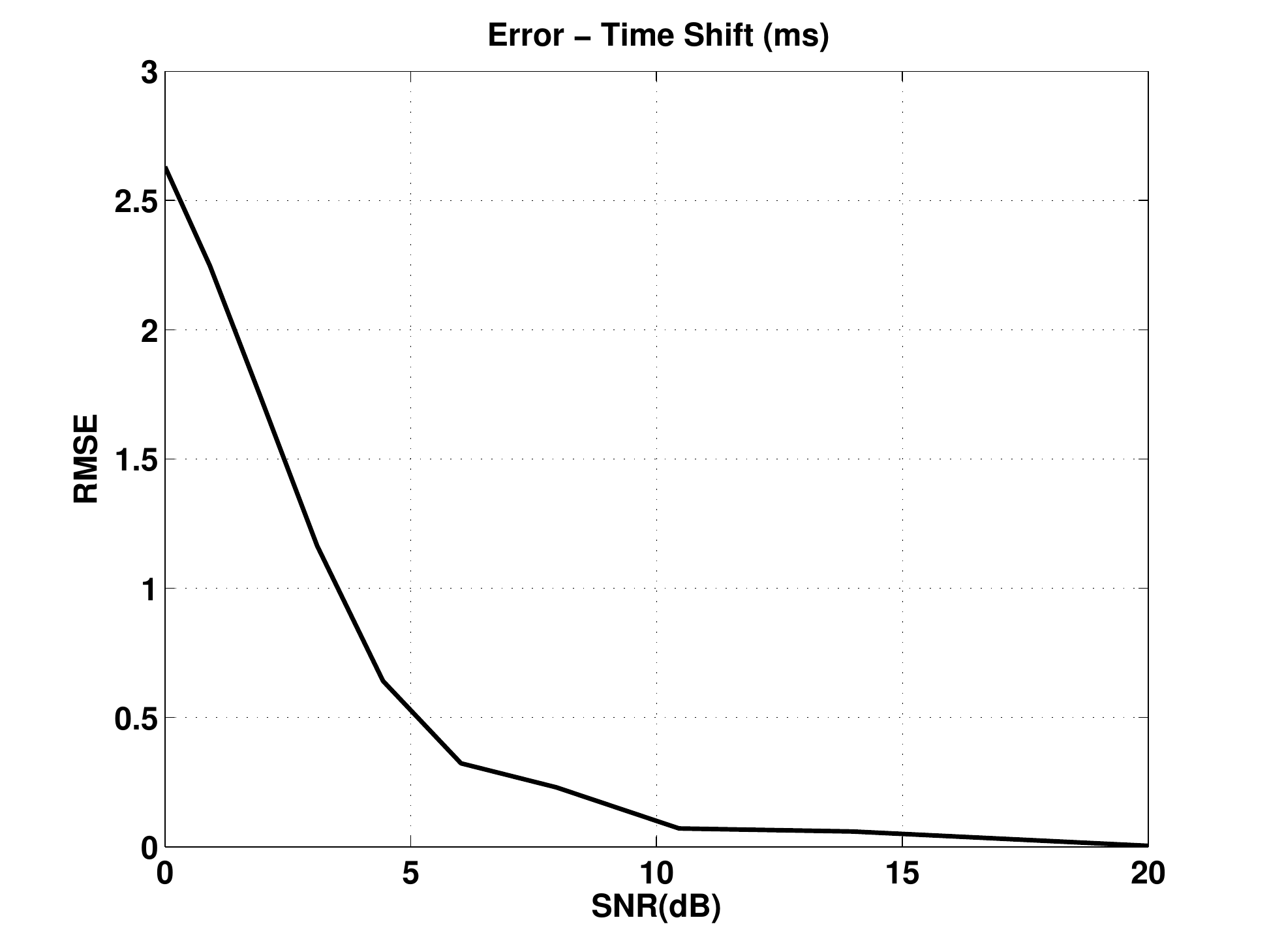}
    \caption{Time shift error}
    \label{fg:recovery-error-delay}
  \end{subfigure}
  \begin{subfigure}{0.33\columnwidth}
    \includegraphics[width=0.99\columnwidth]{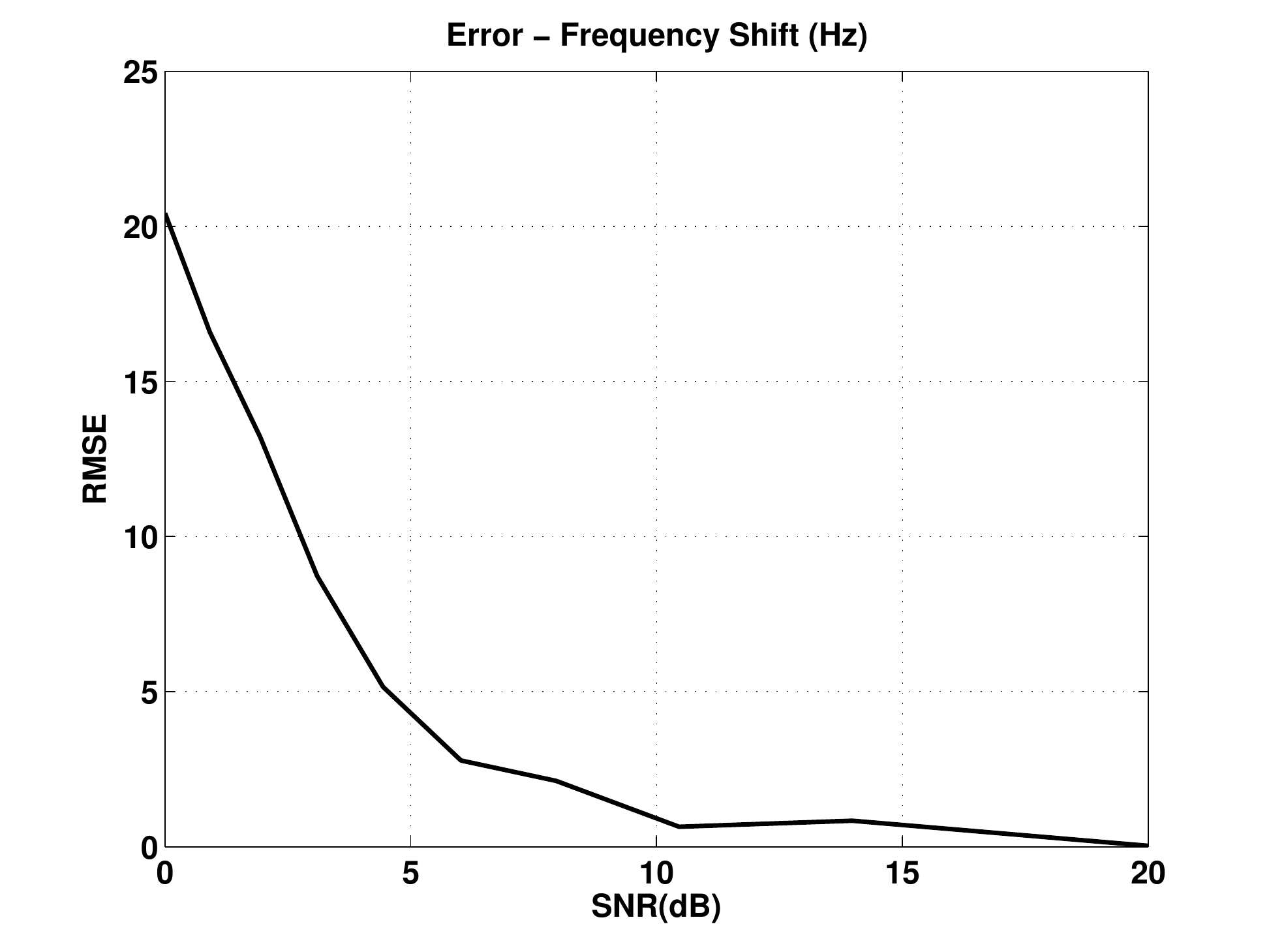}
    \caption{Frequency shift error} 
    \label{fg:recovery-error-Doppler}
  \end{subfigure}
  \begin{subfigure}{0.32\columnwidth}
    \includegraphics[width=0.99\columnwidth]{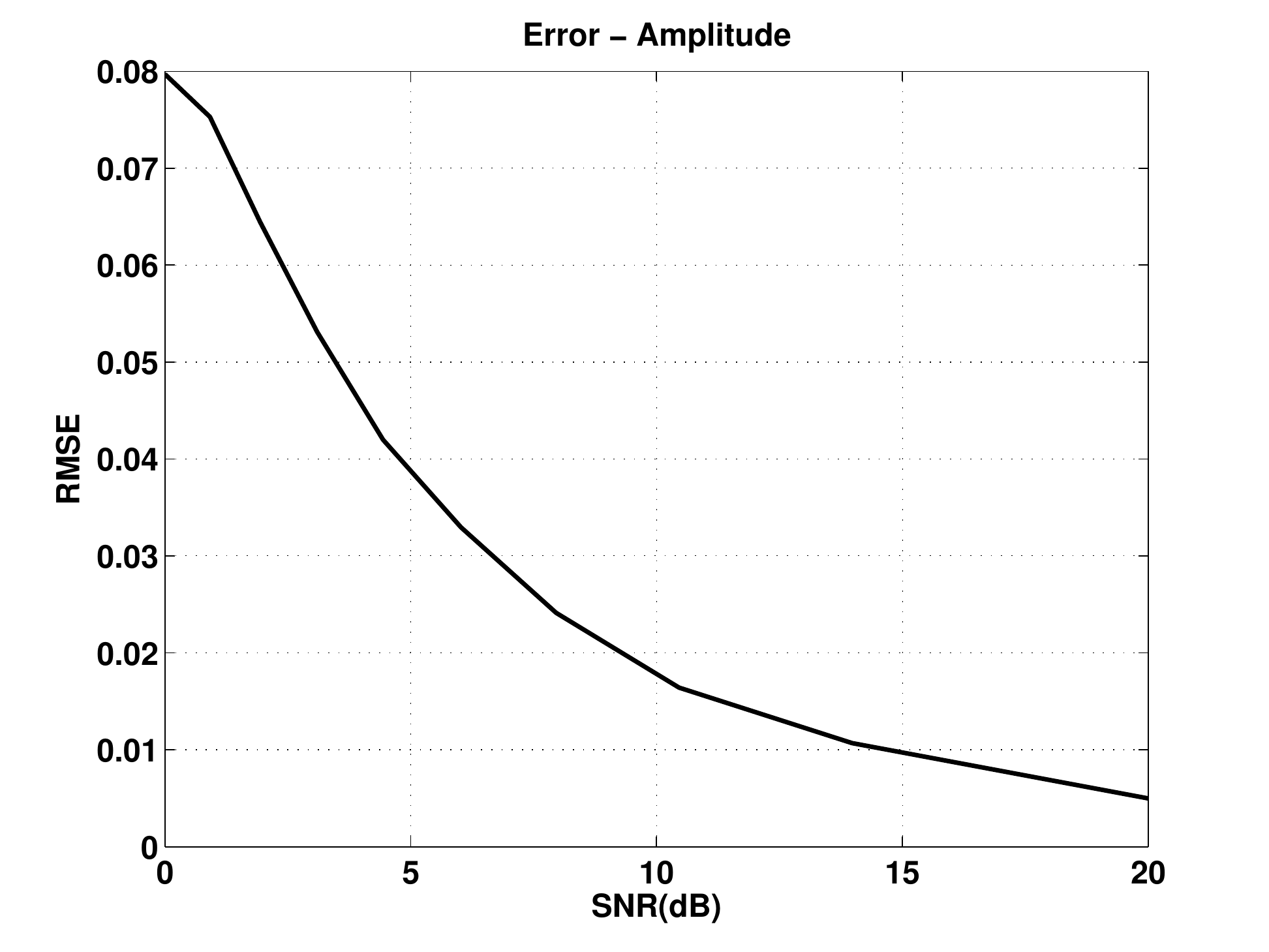}
    \caption{Amplitude error}
    \label{fg:recovery-error-amplitude}
  \end{subfigure}
  \caption{Recovery error from noisy measurements over 1000 Monte Carlo trials.  Note the threshold at $\approx 10$ dB below which the error increases rapidly.}
  \label{fg:recovery-error-noisy}
  \vspace{-0.1in}
\end{figure}

\begin{figure}[tp]
  \centering
  \begin{subfigure}{0.33\columnwidth}
    \includegraphics[width=0.99\columnwidth]{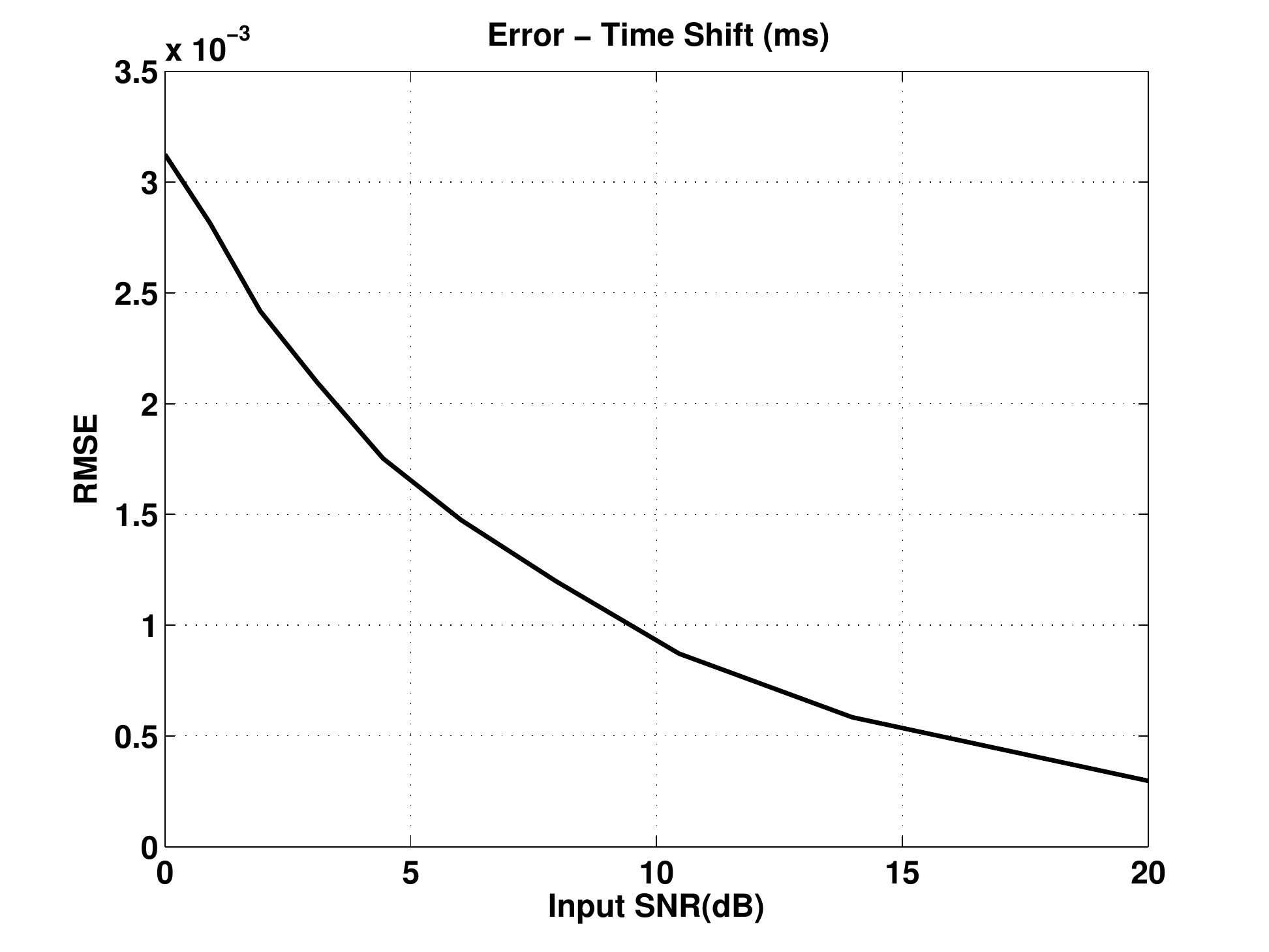}
    \caption{Time shift error}
    \label{fg:recovery-error-delay-denoised}
  \end{subfigure}
  \begin{subfigure}{0.33\columnwidth}
    \includegraphics[width=0.99\columnwidth]{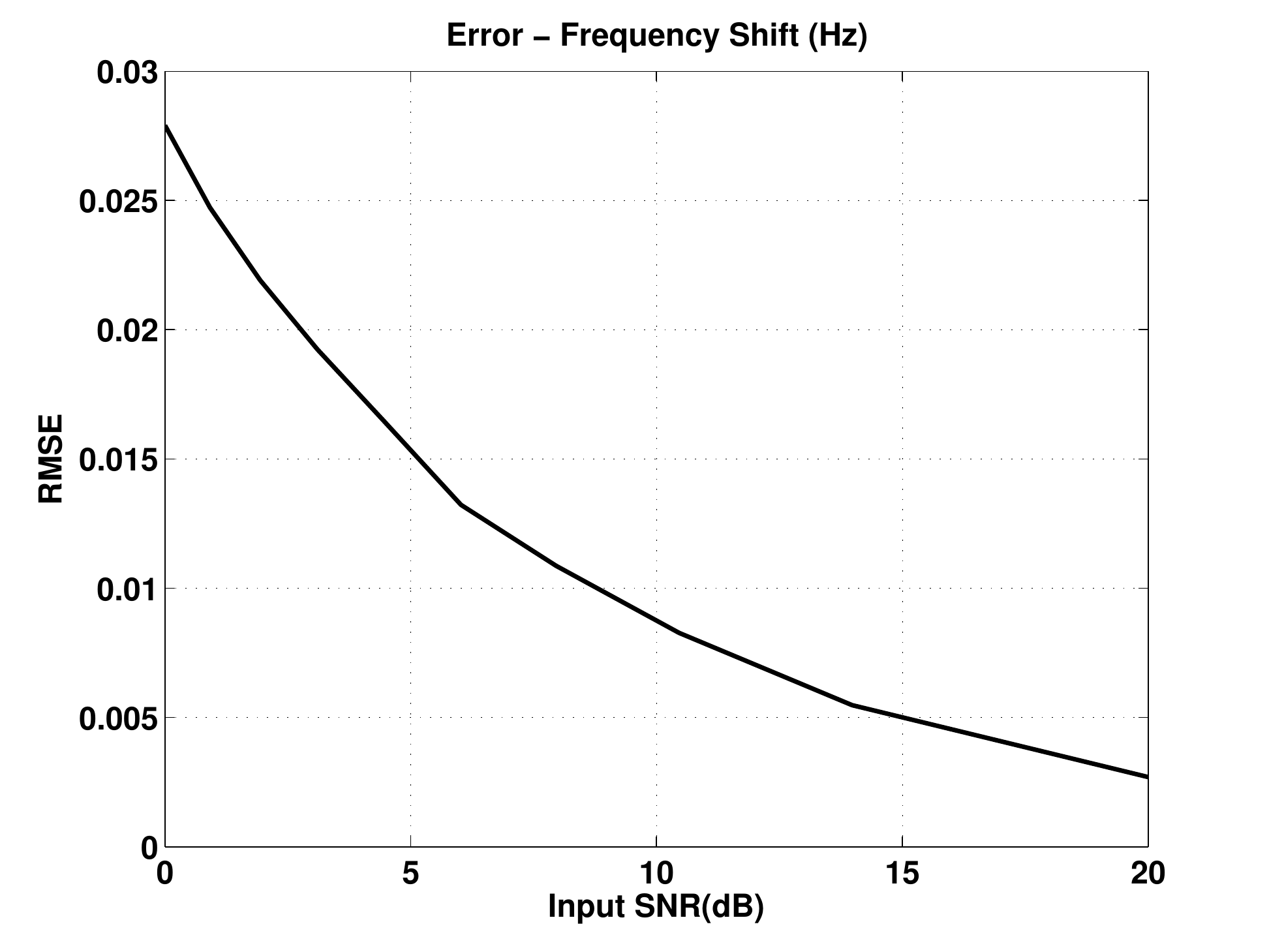}
    \caption{Frequency shift error} 
    \label{fg:recovery-error-Doppler-denoised}
  \end{subfigure}
  \begin{subfigure}{0.32\columnwidth}
    \includegraphics[width=0.99\columnwidth]{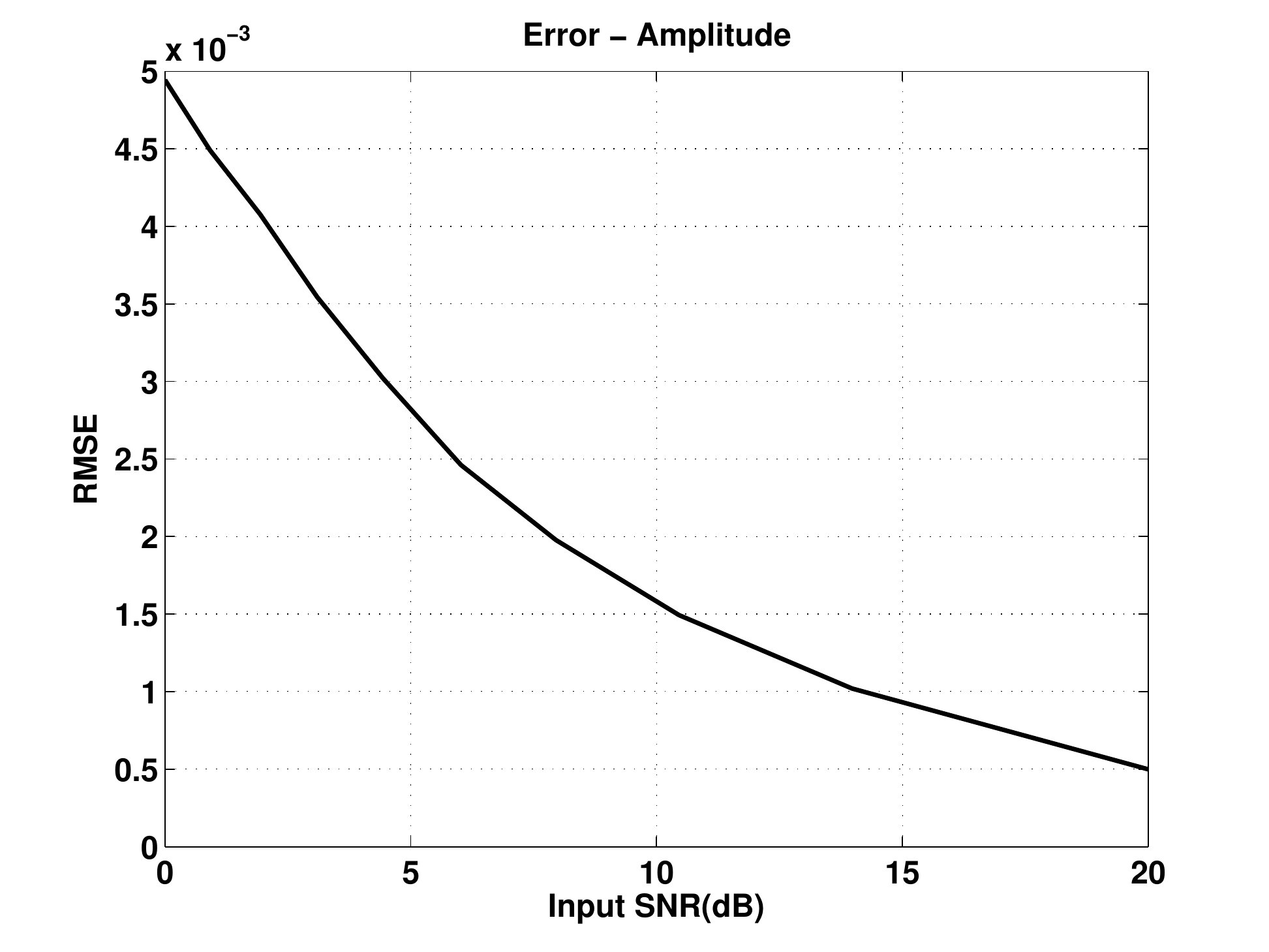}
    \caption{Amplitude error}
    \label{fg:recovery-error-amplitude-denoised}
  \end{subfigure}
  \caption{Recovery from noisy measurements that are first denoised using atomic norm denoising.}
  \label{fg:recovery-error-noisy-denoised}
  \vspace{-0.2in}
\end{figure}

Finally, to investigate the effective resolution limits of the approach, in Fig. \ref{fg:recovery-error-noisy-limited} we performed the same Monte Carlo trials described above but the time shift parameter has been drawn uniformly at random from the interval $[0,\tau_{\max}/10)$ and the frequency shift parameter has been drawn uniformly at random from the interval $(-f_{\max}/10, f_{\max}/10)$.  This ensures that more targets are chosen with parameters that are closely spaced in the time shift frequency shift plane.  The error is shown to still be roughly proportional to the amount of noise down to about 20 dB SNR, at which point it grows faster.  This seems to be consistent with Fig. \ref{fg:recovery-error-noisy} as the threshold, at which point the error grows rapidly, occurs at a 10 dB higher SNR because the reduction in the parameter space, from which the parameters are chosen, has been reduced by a factor of 10 in each dimension.  This means that the parameters are likely to be much closer to each other.  We emphasize here that the limits on resolution using this method is an open problem and requires further investigation and that the choice of LFM parameters, namely the chirp rate $f_c^m$, has not been optimized in any way for recovery.  These average-case errors seem to align with our intuition on the recovery.

\begin{figure}[tp]
  \centering
  \begin{subfigure}{0.33\columnwidth}
    \includegraphics[width=0.99\columnwidth]{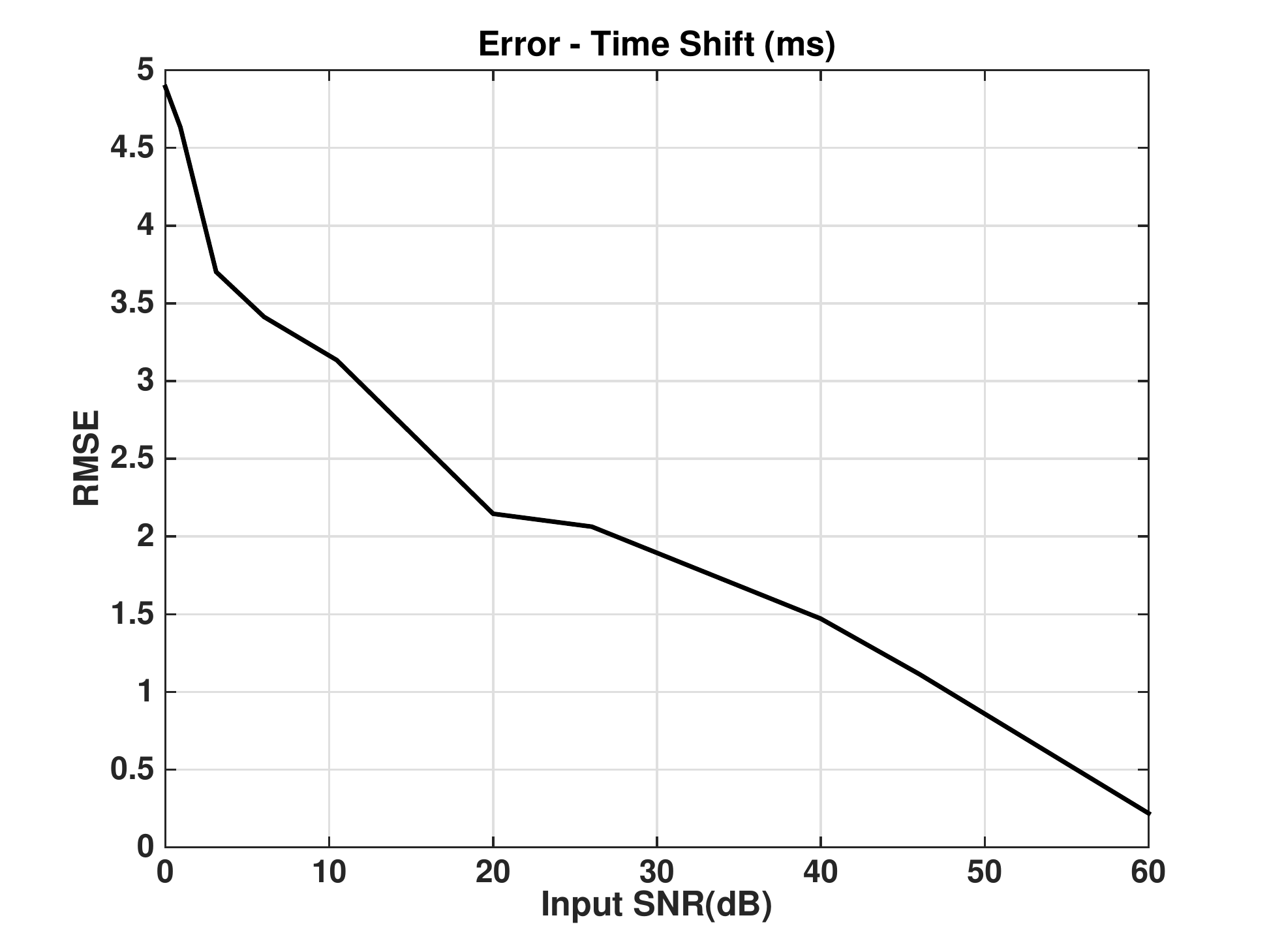}
    \caption{Time shift error}
    \label{fg:recovery-error-delay-limited}
  \end{subfigure}
  \begin{subfigure}{0.33\columnwidth}
    \includegraphics[width=0.99\columnwidth]{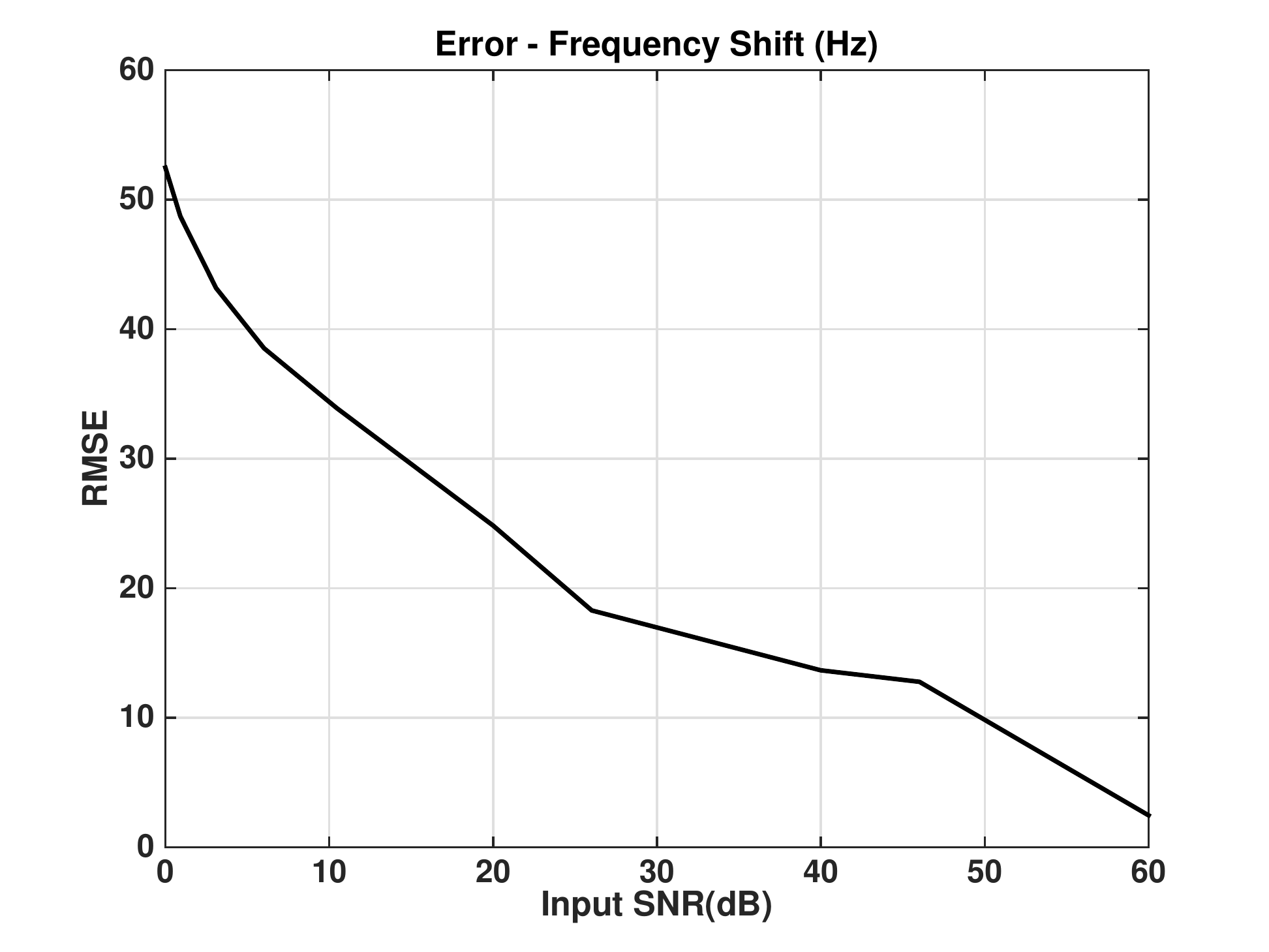}
    \caption{Frequency shift error} 
    \label{fg:recovery-error-Doppler-limited}
  \end{subfigure}
  \begin{subfigure}{0.32\columnwidth}
    \includegraphics[width=0.99\columnwidth]{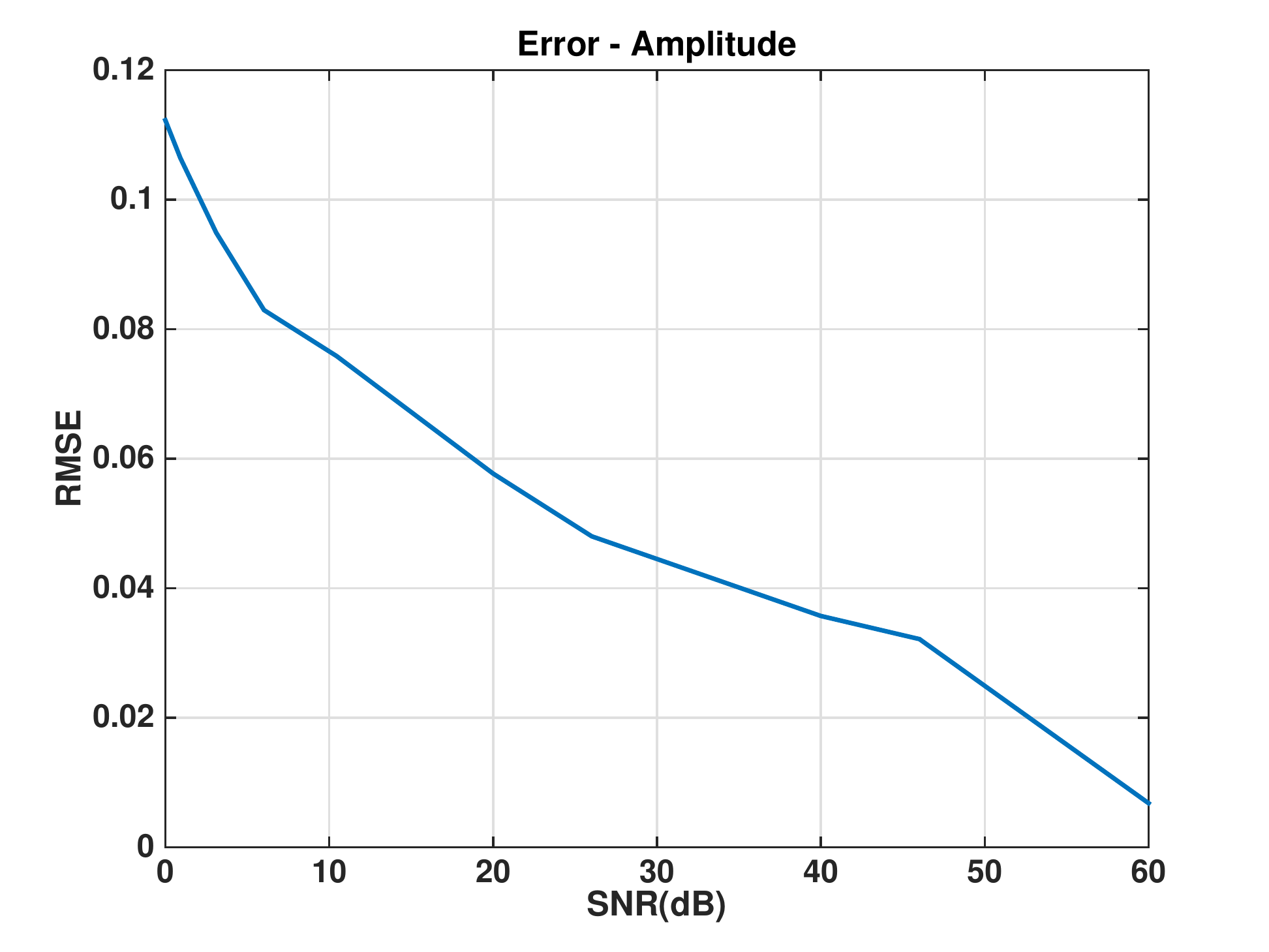}
    \caption{Amplitude error}
    \label{fg:recovery-error-amplitude-limited}
  \end{subfigure}
  \caption{Recovery error from noisy measurements over 1000 Monte Carlo trials with the parameters limited to $[0,\tau_{\max}/10)$ and $(-f_{\max}/10, f_{\max}/10)$.}
  \label{fg:recovery-error-noisy-limited}
  \vspace{-0.1in}
\end{figure}

\section{Conclusions and Future Work}\label{sec:conclusion}
We have proposed a novel technique for identifying LTV systems using LFM pulses as probing waveforms.  This technique leverages a vast array of frequency recovery, or estimation, algorithms in the existing literature.  We have shown that the KT algorithm, along with a denoising procedure, provides excellent numerical results in simulated Monte Carlo trials.  The primary advantage to our approach is that the resources needed (e.g., bandwidth and acquisition time) scale proportionally to the complexity of the LTV system, i.e., the number of scatterers in a radar scene or multi path sources in a communication system.  We also operate in a continuous parameter space, so we have resolution limits that scale with the amount of noise present in the measurements, all the way down to infinite resolution if there is no noise.  A full analysis of the resolution limits using this approach is an open problem that requires non-asymptotic versions of the main results.  Additionally, the optimal choice of LFM chirp rates $f_c^m$ is an open problem.

\appendices

\section{Proof of Theorem \ref{thm:asymptotic-noisy-recovery}}\label{app:proof-asymp-perfect-recovery-noise}
Two sources of errors can cause errors in the solution to \eqref{eq:frequency-linear-system-noisy}, and the proof relies on proving that each has vanishing influence asymptotically.  First, the corruption of the frequency estimates $\hat{\nu}_k = \nu_k + \gamma_k$ by $\gamma_k$ causes an error in the estimate of the target parameters.  The law of large numbers ensures that this source of error has vanishing influence.  Second, the unordered nature of the estimated frequencies, captured by the permutation matrix $\mathbf{P}$, means that multiple solutions can exist for finite $M$.  Diversity in the selection of $f_c^m$ ensures that the solution to \eqref{eq:frequency-linear-system-noisy}, simultaneously for all $M$ pulses, is unique asymptotically.  We note here that this latter source of error is highly dependent on the particular LTV system (or target scene) and the choice of $f_c^m$.  The $f_c^m$ can be adaptively chosen to most effectively identify the LTV parameters.

With the mild constraint that $f_c^m \neq f_c^q$ for some $m\neq q$, i.e., every pulse does not have the same chirp rate, we can use the Moore-Penrose pseudoinverse of $\mathbf{A}$ to write \eqref{eq:frequency-linear-system-noisy} as
\begin{equation}\label{eq:frequency-linear-system-noisy-inverse}
  \min_{\mathbf{\beta}, \mathbf{P}} ||\mathbf{\beta} - (\mathbf{A}^*\mathbf{A})^{-1}\mathbf{A}^*\mathbf{P}\mathbf{\hat{\nu}}||^2_2.
\end{equation}
The estimated frequencies $\hat{\mathbf{\nu}} = \mathbf{\nu} + \mathbf{\gamma}$ we assume are corrupted by $\mathbf{\gamma}$ that is independent and has finite variance.  The term $(\mathbf{A}^*\mathbf{A})^{-1}\mathbf{A}^*\mathbf{P}\mathbf{\gamma}$ is the error in the parameter estimates due to the error in the frequency estimates (which is in turn due to the noise).  We will show that this term gets smaller as more pulses are processed.  Recall that $\mathbf{A} = \mathbf{B}\otimes \mathbf{I}_K$, and we can write the first term in the pseudoinverse $(\mathbf{A}^*\mathbf{A})^{-1} = ((\mathbf{B}^*\mathbf{B})^{-1}) \otimes \mathbf{I}_K$.  The matrix $\mathbf{B}$ contains two columns, the first is all ones and the second contains the chirp rates of the pulses $f_c^m$.  If we restrict our choice of chirp rates such that for $m$ odd, the chirp rate is $f_c^m = -f_c^{m-1}$, then the two columns of $\mathbf{B}$ have zero inner product and the Gram matrix is diagonal\footnote{Note that this requirement can be relaxed, but the resulting Gram matrix is not diagonal and the estimate variance vanishes more slowly.}.  The pseudoinverse of $\mathbf{A}$ is thus $((\mathbf{B}^*\mathbf{B})^{-1}\mathbf{B}^*) \otimes \mathbf{I}$.  Let us write $\mathbf{G} = \mathbf{B}^*\mathbf{B}$ where $\mathbf{G}$ is diagonal with $g_{11} = M$ and $g_{22} = \sum_{m=1}^M 4(f_c^m)^2 \geq 4M\min_{m}(f_c^m)^2$.  The pseudoinverse of $\mathbf{B}$ is thus a scaled version of $\mathbf{B}^*$.  The first column of $\mathbf{B}$ is scaled by $1/M$ and the second columns is scaled by $1/\sum_{m=1}^M 4(f_c^m)^2$, which is bounded by a quantity proportional to $1/M$.  The error term for each target parameter, after carrying out the matrix multiplication, is a scaled sum of each individual frequency error.  Concentrating on the frequency shifts, which are contained in the first half of the vector $\beta$, we get the error for the $k^{th}$ frequency shift to be
$$\sum_{m=1}^M\frac{1}{M}\gamma_k^m.$$
Each $\gamma_k^m$ has the same variance $\tilde{\sigma}^2$, which approaches \eqref{eq:CRB-freq}, and these are all independent of each other.  The independence is a direct result of the assumption of independent noise.  Each pulse is processed separately by the same deterministic recovery algorithm, so if the noise in the input to the algorithm is independent, then the errors in the output of the algorithm are also independent.  The variance of the scaled sum is therefore reduced by the factor $1/M$, so the error is vanishing as $M\to\infty$.

The matrix $\mathbf{P}$ affects the recovery for a small number of pulses by introducing ambiguity into the recovered parameters, i.e., even in the absence of noise, the solution to \eqref{eq:frequency-linear-system-noisy-inverse} might not be unique.  By requiring a variety of chirp rates in the pulses, we ensure the set of solutions from each pulses cluster around the true parameters while the ambiguous solutions are spread in the time shift--frequency shift plane.  Asymptotically, the clustering becomes tighter because the noise variance is vanishing with $1/M$.

\section{Ambiguous Phase Terms}\label{sec:ambiguous-phase-terms}
Recall the delay-phase mapping \eqref{eq:phase-mapping-positive}.  To highlight the dependence on the delay $\tau_k$, let us write $\theta^{m}(\tau)$.  To prevent ambiguous phase terms, the complex exponential $\mathrm{e}^{j2\pi\theta^m(\tau)}$ must be bijective over $0\leq \tau \leq \tau_{\max}$, which requires the quadratic function $\theta^{m}(\tau)$ to be bijective over $0\leq \tau \leq \tau_{\max}$ and its range limited to an interval of length at most 1.  The following lemma provides necessary and sufficient conditions to prevent these ambiguities.

\begin{lemma}[Ambiguous Phase Terms] \label{lemma:ambiguous-phase}
Let $0 \leq \tau \leq \tau_{\max}$.  The function (of $\tau$)
$\mathrm{e}^{j2\pi\theta^{m}(\tau)} = \mathrm{e}^{j2\pi f_c^{m}\tau^2}$
is bijective if and only if
\begin{equation}\label{eq:fc-constraint}
  0 < |f_c^{m}| \leq \frac{1}{\tau_{\max}^2}.
\end{equation}

\end{lemma}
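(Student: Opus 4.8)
The plan is to reduce the claim to a one-dimensional statement about when a complex exponential is one-to-one on an interval, and then simply read off the two inequalities in \eqref{eq:fc-constraint}. Writing $\theta^m(\tau) = f_c^m\tau^2$, I would observe that the map in question factors as the composition
\[
  \tau \;\longmapsto\; \tau^2 \;\longmapsto\; f_c^m\tau^2 \;\longmapsto\; \mathrm{e}^{j2\pi f_c^m\tau^2}.
\]
Because \emph{bijective} here means a bijection onto the image, the content is injectivity, so I would focus on showing that $\tau \mapsto \mathrm{e}^{j2\pi f_c^m\tau^2}$ is one-to-one on $[0,\tau_{\max}]$ exactly under \eqref{eq:fc-constraint}.

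First I would dispose of the inner two maps. On $[0,\tau_{\max}]$ the squaring map $\tau \mapsto \tau^2$ has derivative $2\tau \geq 0$ and is strictly increasing for $\tau > 0$, hence a strictly monotone bijection of $[0,\tau_{\max}]$ onto $[0,\tau_{\max}^2]$; multiplying by the nonzero constant $f_c^m$ preserves monotonicity, so $\theta^m$ is a monotone bijection of $[0,\tau_{\max}]$ onto the interval $I$ with endpoints $0$ and $f_c^m\tau_{\max}^2$, of length $L = |f_c^m|\,\tau_{\max}^2$. If $f_c^m = 0$ the exponential collapses to the constant $1$ and is not injective, which already forces the lower bound $0 < |f_c^m|$.

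The crux is the injectivity of $x \mapsto \mathrm{e}^{j2\pi x}$ on $I$. Since $\mathrm{e}^{j2\pi x_1} = \mathrm{e}^{j2\pi x_2}$ exactly when $x_1 - x_2 \in \mathbb{Z}$, the restriction to $I$ is injective if and only if $I$ contains no two points differing by a nonzero integer, and for an interval this fails precisely once its length $L$ exceeds $1$. Combined with the monotone bijection above, the full map is injective iff $0 < |f_c^m|\tau_{\max}^2 \leq 1$, which is \eqref{eq:fc-constraint}. For the only-if direction I would argue the contrapositive: if $L > 1$, then $I$ contains a pair $x, x+1$, and pulling these back through the monotone bijection produces two distinct delays with identical phase, breaking injectivity.

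The step I expect to require the most care is the boundary case $L = 1$ together with the precise reading of \emph{bijective}. On the closed domain $[0,\tau_{\max}]$ the endpoints $\tau = 0$ and $\tau = \tau_{\max}$ both map to $1$ when $L = 1$, so strict injectivity on the closed interval would seem to demand $L < 1$; the equality permitted in \eqref{eq:fc-constraint} is reconciled by recalling that the physical delays obey $0 < \tau_k < \tau_{\max}$ strictly, so the offending endpoint pair never arises and the map is injective on the relevant open interval. I would flag this caveat explicitly and note that surjectivity onto the image comes for free from the monotonicity argument once injectivity is in hand.
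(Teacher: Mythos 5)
Your proof is correct and follows essentially the same route as the paper's: show that $\theta^{m}(\tau) = f_c^{m}\tau^2$ is strictly monotone on $[0,\tau_{\max}]$ and then require its range to fit inside one period of $x \mapsto \mathrm{e}^{j2\pi x}$, yielding $|f_c^{m}|\tau_{\max}^2 \leq 1$ together with $f_c^{m}\neq 0$. Your handling of the boundary case is in fact more careful than the paper's: the paper's proof derives the strict inequality $|f_c^{m}|\tau_{\max}^2 < 1$, which is inconsistent with the non-strict $\leq$ in the lemma statement, and your observation that equality makes the endpoints $\tau = 0$ and $\tau = \tau_{\max}$ collide under the exponential --- harmless only because the physical delays lie in the open interval $(0,\tau_{\max})$ --- pinpoints exactly where that discrepancy lives.
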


\begin{proof}
A function is bijective over an interval if and only if it is strictly increasing or decreasing over that interval.  We examine $\theta^{m}(\tau)$ and its derivative on the interval $0 \leq \tau \leq \tau_{\max}$.  The derivative is
\begin{equation}\label{eq:psi-derivative}
  \frac{d}{d\tau}\theta^{m}(\tau) = 2f_c^{m}\tau.
\end{equation}

At $\tau = 0$, $\theta^{m}(0) = 0$ and $\frac{d}{d\tau}\theta^{m}(\tau)|_{\tau = 0} = 0$.  At $\tau = \tau_{\max}$, $\theta^{m}(\tau_{\max}) = f_c^m\tau_{\max}^2$ and the derivative also has the same sign as $f_c^m$.  In either case, $\theta^{m}(\tau)$ is monotonically increasing or decreasing on $0\leq \tau \leq\tau_{\max}$.
We require
\begin{equation}\label{eq:psi-constraint}
  |\theta^{m}(\tau_{\max})| = |f_c^{m}|\tau_{\max}^2 < 1
\end{equation}
leading directly to \eqref{eq:psi-derivative}.
\end{proof}

 \bibliographystyle{IEEEtran}
 \bibliography{IEEEabrv,andrew.bib}

\begin{thebibliography}{10}
\providecommand{\url}[1]{#1}
\csname url@samestyle\endcsname
\providecommand{\newblock}{\relax}
\providecommand{\bibinfo}[2]{#2}
\providecommand{\BIBentrySTDinterwordspacing}{\spaceskip=0pt\relax}
\providecommand{\BIBentryALTinterwordstretchfactor}{4}
\providecommand{\BIBentryALTinterwordspacing}{\spaceskip=\fontdimen2\font plus
\BIBentryALTinterwordstretchfactor\fontdimen3\font minus
  \fontdimen4\font\relax}
\providecommand{\BIBforeignlanguage}[2]{{%
\expandafter\ifx\csname l@#1\endcsname\relax
\typeout{** WARNING: IEEEtran.bst: No hyphenation pattern has been}%
\typeout{** loaded for the language `#1'. Using the pattern for}%
\typeout{** the default language instead.}%
\else
\language=\csname l@#1\endcsname
\fi
#2}}
\providecommand{\BIBdecl}{\relax}
\BIBdecl

\bibitem{kailath-TVCC-1962}
T.~Kailath, ``Measurements on time-variant communication channels,'' \emph{IRE
  Trans. Inf. Theory}, vol.~8, no.~5, pp. 229--236, Sep 1962.

\bibitem{bello-1962}
P.~A. Bello, ``Measurement of random time-variant linear channels,''
  \emph{{IEEE} Trans. Inf. Theory}, vol.~15, no.~4, pp. 469--475, Jul 1969.

\bibitem{bello-1963}
------, ``Characterization of randomly time-variant linear channels,''
  \emph{{IEEE} Trans. Commun.}, vol.~11, no.~4, pp. 360--393, Dec 1963.

\bibitem{skolnik08}
M.~Skolnik, \emph{Radar handbook}, 3rd~ed.\hskip 1em plus 0.5em minus
  0.4em\relax New York: McGraw-Hill, 2008.

\bibitem{woodward-radar1953}
P.~M. Woodward, \emph{Probability and Information Theory with Applications to
  Radar}.\hskip 1em plus 0.5em minus 0.4em\relax McGraw-Hill, 1953.

\bibitem{strohmer-cs-radar-2009}
M.~A. Herman and T.~Strohmer, ``High-resolution radar via compressed sensing,''
  \emph{{IEEE} Trans. Signal Process.}, vol.~57, no.~6, pp. 2275--2284, Jun
  2009.

\bibitem{chi11}
Y.~Chi, L.~L. Scharf, A.~Pezeshki, and A.~R. Calderbank, ``Sensitivity to basis
  mismatch in compressed sensing,'' \emph{{IEEE} Trans. Signal Process.},
  vol.~59, no.~5, pp. 2182--2195, May 2011.

\bibitem{heckel-bolcskei-2013}
R.~Heckel and H.~B{\"o}lcskei, ``Identification of sparse linear operators,''
  \emph{{IEEE} Trans. Inf. Theory}, vol.~59, no.~12, pp. 7985--8000, Dec 2013.

\bibitem{friedlander-parametric2012}
B.~Friedlander, ``An efficient parametric technique for {D}oppler-delay
  estimation,'' \emph{{IEEE} Trans. Signal Process.}, vol.~60, no.~8, pp.
  3953--3963, Aug 2012.

\bibitem{bajwa11-radar}
W.~U. Bajwa, K.~Gedalyahu, and Y.~C. Eldar, ``Identification of parametric
  underspread linear systems and super-resolution radar,'' \emph{{IEEE} Trans.
  Signal Process.}, vol.~59, no.~6, pp. 2548--2561, Jun 2011.

\bibitem{harms-camsap-2013}
A.~Harms, W.~U. Bajwa, and R.~Calderbank, ``Resource efficient parametric
  recovery of linear time-varying systems,'' in \emph{Proc. {IEEE} {CAMSAP}},
  2013.

\bibitem{dsp-proakis}
J.~Proakis and D.~Manolakis, \emph{Digital Signal Processing}, 4th~ed.\hskip
  1em plus 0.5em minus 0.4em\relax Prentice Hall, 2006.

\bibitem{klauder-bstj-1960}
J.~R. Klauder, A.~C. Price, S.~Darlington, and W.~J. Albersheim, ``The theory
  and design of chirp radars,'' \emph{Bell Syst. Tech. J.}, vol.~39, no.~4, pp.
  745--808, July 1960.

\bibitem{tufts-kumaresan-1982}
D.~W. Tufts and R.~Kumaresan, ``Estimation of frequencies of multiple
  sinusoids: Making linear prediction perform like maximum likelihood,''
  \emph{Proc. {IEEE}}, vol.~70, no.~9, pp. 975--989, Sep 1982.

\bibitem{eldar-doppler-focusing}
O.~Bar-Ilan and Y.~C. Eldar, ``Sub-{N}yquist radar via {D}oppler focusing,''
  \emph{{arXiv}:1211.0722{[cs.IT]}}, Sep 2013.

\bibitem{recht12-atomic-norm}
V.~Chandrasekaran, B.~Recht, P.~A. Parrilo, and A.~S. Willsky, ``The convex
  geometry of linear inverse problems,'' \emph{Found. Comput. Math.}, vol.~12,
  no.~6, pp. 805--849, Dec 2012.

\bibitem{bhaskar13-atomic-norm}
B.~N. Bhaskar, G.~Tang, and B.~Recht, ``Atomic norm denoising with applications
  to line spectral estimation,'' \emph{{IEEE} Trans. Signal Process.}, vol.~61,
  no.~23, pp. 5987--5999, Dec 2013.

\bibitem{tang13-offgrid}
G.~Tang, B.~N. Bhaskar, P.~Shah, and B.~Recht, ``Compressed sensing off the
  grid,'' \emph{{IEEE} Trans. Inf. Theory}, vol.~59, no.~11, pp. 7465--7490,
  Nov 2013.

\bibitem{stoica-89}
P.~Stoica, R.~L. Moses, B.~Friedlander, and T.~S{\"o}derstr{\"o}m, ``Maximum
  likelihood estimation of the parameters of multiple sinusoids from noisy
  measurements,'' \emph{{IEEE} Trans. Acoust., Speech, Signal Process.},
  vol.~37, no.~3, pp. 378--392, Mar 1989.

\end{thebibliography}

\end{document}